\newcommand\algorithmicprocedure{\textbf{procedure}}
\newcommand{\algorithmicendprocedure}{\algorithmicend\ \algorithmicprocedure}
\newcommand\PROCEDURE[3][default]{%
  \ALC@it
  \algorithmicprocedure\ \textsc{#2}(#3)%
  \ALC@com{#1}%
  \begin{ALC@prc}%
}
\newcommand\ENDPROCEDURE{%
  \end{ALC@prc}%
  \ifthenelse{\boolean{ALC@noend}}{}{%
    \ALC@it\algorithmicendprocedure
  }%
}
\newenvironment{ALC@prc}{\begin{ALC@g}}{\end{ALC@g}}
\definecolor{blue}{rgb}{0,0,0} 
\newtheorem{defi}{Definition}
\newtheorem{theorem}{Theorem}
\newtheorem{lemma}{Lemma}
\newtheorem{rem}{Remark}
\newtheorem{prop}{Proposition}
\newtheorem{cor}{Corollary}
\newcommand{\blue}[1]{\textcolor{blue}{#1}}
\DeclareMathOperator*{\argmin}{\arg\!\min}
\DeclareMathOperator*{\argmax}{\arg\!\max}
\begin{document}
\title{\fontsize{21}{28}\selectfont Joint Laser Inter-Satellite Link Matching and Traffic Flow Routing in LEO Mega-Constellations via Lagrangian Duality}

\author{
  \IEEEauthorblockN{$^\dagger$Zhouyou Gu, $^\ddagger$Jinho Choi, $^\dagger$Jihong Park}\\
\thanks{
$^\dagger$Z. Gu and J. Park are with the Information Systems Technology and Design Pillar, Singapore University of Technology and Design, Singapore 487372 (email: \{zhouyou\_gu, jihong\_park\}@sutd.edu.sg).
}
\thanks{
$^\ddagger$J. Choi is with the School of Electrical and Mechanical Engineering,
the University of Adelaide, Adelaide, SA 5005, Australia
(email: \{jinho.choi\}@adelaide.edu.au).
}
\thanks{Source codes are available at \url{github.com/zhouyou-gu/leo-sat-flow}. Corresponding authors: J. Park and Z. Gu.}
\vspace{-0.75cm}
}

\maketitle

\begin{abstract}
Low Earth orbit (LEO) mega-constellations greatly extend the coverage and resilience of future wireless systems. Within the mega-constellations, laser inter-satellite links (LISLs) enable high-capacity, long-range connectivity. Existing LISL schemes often overlook mechanical limitations of laser communication terminals (LCTs) and non-uniform global traffic profiles caused by uneven user and gateway distributions, leading to suboptimal throughput and underused LCTs/LISLs -- especially when each satellite carries only a few LCTs. \blue{This paper investigates the joint optimization of LCT connections and traffic routing to maximize the constellation throughput, considering the realistic LCT mechanics and a snapshot-level spatial traffic profile.} \blue{We show that the resulting per-snapshot formulation is a mixed-integer program coupling LCT connections with flow-rate variables under link capacity constraints, and the associated fixed-snapshot abstract decision problem is NP-hard.} Due to its intractability, we resort to relaxing the coupling constraints via Lagrangian duality, decomposing the problem into a weighted graph-matching for LCT connections, weighted shortest-path routing tasks, and a linear program for rate allocation. Here, Lagrange multipliers reflect congestion weights between satellites, jointly guiding the matching, routing, and rate allocation. Subgradient descent optimizes the multipliers, with proven convergence. Simulations using real-world constellation and terrestrial data show that our methods substantially improve network throughput by up to $35\%$--$145\%$ over existing non-joint approaches.
\end{abstract}

\begin{IEEEkeywords} 
Mega-constellations; Laser links; Graph theory.
\end{IEEEkeywords}

\section{Introduction}
Low Earth orbit (LEO) satellites are emerging as a cornerstone of future generations of wireless communication systems, particularly in providing ubiquitous connectivity worldwide \cite{maiolinicapez2024use}.
Owing to their reduced propagation delays and improved link budgets to terrestrial terminals, the deployment of LEO satellite constellations complements terrestrial networks and extends the network coverage by enabling direct connections to users \cite{he2024directtosmartphone}.
Commercial proprietary programs such as SpaceX’s Starlink and OneWeb have been deploying thousands of LEO satellites, forming mega-constellations \cite{wu2025enhancing}.
The integration of laser inter-satellite links (LISLs) using free-space optical (FSO) systems in the LEO mega-constellations promises to deliver high data rates and long ranges in inter-satellite communications \cite{elamassie2023freea}, connecting the traffic flows across the whole constellation.

Compared to radio frequency (RF) links, LISLs modulate data using directional beams that emit nearly all the transmitted energy into a narrow path, thereby minimizing energy spreading \cite{ntontin2025vision}. 
However, due to the narrow beam width, one laser communication terminal (LCT) can only connect to one other LCT at a time, limiting the number of LISLs that can be established \cite{wang2024free}. 
Also, the beam steering is subject to mechanical limitations such as limited steering range and vibrations, which can further restrict the connectivity of the constellation \cite{kaymak2018survey}.
Thus, how to allocate the scarce LCTs to maximize the constellation connectivity while considering their mechanical limitations is a critical challenge.

Given the limited LCT resources on each satellite, simple designs use grid patterns that link each satellite to its immediate neighbors in the same orbital plane and to satellites in adjacent planes \cite{chaudhry2021laser,chen2021analysis}.
In this approach, the offset of the connected LCTs between adjacent orbital planes can be further adjusted to optimize the constellation connectivity \cite{rao2025minimumhop,guo2024constellation}.
Alternatively, the authors in \cite{bhattacherjee2019network} propose connecting the LISLs via optimized motifs (small, repeated connection patterns).
More flexible designs use maximum weight matching (MWM) that assigns weights to all potential LISLs and connects the LCTs based on these weights \cite{leyva-mayorga2021interplane,ron2025time}. Here, weights are assigned based on a single LISL property, e.g., maximum link rate \cite{leyva-mayorga2021interplane}, or a score that considers multiple link characteristics, e.g., capacity and latency \cite{ron2025time}.
However, these designs ignore the diverse traffic profiles across the constellation due to the uneven distribution of users and gateways.
This can lead to underutilized LCTs and LISLs and suboptimal network throughput \cite{bhattacherjee2019network}. 
Incorporating traffic flow routing into the LCT connection design is essential to improve the overall throughput of the constellation.
Recent traffic-engineering approaches such as SaTE optimize the traffic-rate allocation after the topology and routes are fixed, rather than jointly optimizing the LISL matching, routing, and rate allocation \cite{wu2025sate}.

\blue{The joint link-matching and flow-routing formulation leads to a mixed-integer optimization problem \cite{ahuja1993network}, and the associated fixed-snapshot abstract decision problem is NP-hard \cite{jarry2012multiflows} (as proved later in Proposition~\ref{prop:np_hardness} and the appendix).}
One powerful technique to solve the problem efficiently is to use Lagrangian duality \cite{fisher2004lagrangian}.
It views the integer program as a core problem plus a set of complicating constraints, and by lifting these constraints to the objective function with Lagrange multipliers (or Lagrangian dual variables), the original problem is relaxed and can be solved efficiently due to fewer constraints \cite{ahuja1993network}. 
This method has been successfully applied to optimize the network coding multicast rates \cite{wang2023joint} and per-traffic-flow energy efficiency \cite{huang2025lagrangianbased} in satellite networks by lifting bottleneck link constraints and end-to-end path length constraints, respectively.
Learning-based surrogates can also be used to infer such dual variables in real time.
\blue{Nevertheless, how to design a Lagrangian dual method that considers practical LCT mechanical limitations and a snapshot-level spatial traffic profile over the globe in LEO mega-constellations is still an open problem.}

\begin{table*}[t]
  \centering
  \caption{Summary of Notations}
  \label{tab:notation}
  \vspace{-0.15cm}
  \footnotesize
  \setlength{\tabcolsep}{4pt}
  \renewcommand{\arraystretch}{1.05}
  \begin{tabular}{@{}l p{0.35\textwidth} l p{0.35\textwidth}@{}}
    \hline
    \textbf{Symbol} & \textbf{Definition} & \textbf{Symbol} & \textbf{Definition} \\
    \hline
    $t$ & System time (s) & $\mathcal{T}_0$ & Real-world reference time at $t=0$ \\

    $I$, $\mathcal{I}$ & Number of satellites; set of satellites & $\mathbf{l}_i(t)$, $\mathbf{v}_i(t)$ & Position and velocity of satellite $i$ (m, m/s) \\

    $z_{i,j}(t)$ & Distance between satellites $i$ and $j$ (m) & $\mathbf{d}_{i,j}(t)$ & Unit direction vector from $i$ to $j$ \\

    $N'$ & Number of LCTs per satellite & $N,\ \mathcal{N}$ & Total number of LCTs; set of all LCTs \\

    $i_n$ & Satellite index that LCT $n$ belongs to & $\mathbf{u}_n$ & Mounting direction of LCT $n$ (unit vector) \\

    $\theta$ & LCT field of regard (FOR) half-angle (rad) & $\nu$ & Pointing jitter angle (rad) \\

    $\sigma_{\mathrm{J}}$ & Rayleigh scale parameter of pointing jitter & $P_0$ & Optical transmit power (W) \\

    $\Phi(y,z)$ & Gaussian beam intensity at $(y,z)$ & $\Phi_0$ & Peak intensity at beam waist \\

    $W_0$ & Beam waist radius (m) & $W(z)$ & Beam radius at distance $z$ \\

    $z_{\mathrm{R}}$ & Rayleigh range & $C(y,z)$ & LISL capacity lower bound (bps) \\

    $B$ & Receiver bandwidth (Hz) & $A$ & LCT aperture area (m$^2$) \\

    $\Psi$ & Receiver responsivity (A/W) & $\sigma_{\mathrm{N}}$ & Noise amplitude (A) \\

    $U_i(t)$ & Number of Users covered by satellite $i$ & $D$ & Per-user traffic demand (Gbps) \\

    $Q$ & Gateway service capacity per satellite (Gbps) & $Q_i(t)$ & Serving rate of satellite $i$ (Gbps) \\

    $D_i(t)$ & Demand rate of satellite $i$ (Gbps) & $\mathcal{G}^{\text{LCT}}=(\mathcal{N},\mathcal{E})$ & LCT connectability graph \\

    $\hat{z}$ & Max distance for connectable LCT pair & $\mathcal{E}$ & Set of connectable LCT pairs \\

    $c_{n,m}$ & LISL establishment indicator between LCTs $n$ and $m$ & $\mathcal{C}$ & Feasible set of LCT connections $\mathbf{c}$ \\

    $M$ & Number of nearest serving satellites considered & $\mathcal{F}$ & Set of traffic source--target satellite pairs $(s,s')$ \\

    $q^{s,s'}$ & Flow rate from $s$ to $s'$ (Gbps) & $\mathbf{q},\ \mathcal{Q}$ & Flow vector and its feasible set \\

    $\mathcal{E}_{i,j}$ & Connectable LCT pairs between sats $i$ and $j$ & $\mathcal{L}$ & Neighboring satellite pairs (potential links) \\

    $\mathcal{G}^{\text{SAT}}=(\mathcal{I},\mathcal{L})$ & Satellite connectivity graph & $x^{s,s'}_{i,j}$ & Routing indicator of flow $(s,s')$ on link $(i,j)$ \\

    $\mathbf{x},\ \mathcal{X}$ & Routing decisions and feasible set & $r_{n,m}$ & Effective LISL data rate of LCT pair $\{n,m\}$ \\

    $\epsilon$ & LISL outage probability threshold & $\lambda_{i,j}$ & Lagrange multiplier for link-rate constraint on $(i,j)$ \\
    \hline
  \end{tabular}
  \vspace{-0.25cm}
\end{table*}

This paper investigates the joint optimization of LISL matching and traffic flow routing in LEO mega-constellations, where the objective is to maximize the network throughput routed through the established LISLs. We first formulate the joint problem based on the models of constellation orbits, LCT mechanics, and traffic profiles.
To tackle this mixed-integer problem, we apply the Lagrangian dual relaxation method by lifting the maximum rate limitation constraints between neighboring satellites via Lagrange multipliers.
Each Lagrange multiplier represents the congestion level of a satellite pair. A larger multiplier indicates that matching the pair's LCTs serves higher traffic demand. It also indicates that the corresponding LISLs are more congested and should be avoided in routing.
This decouples the joint problem into three Lagrange multiplier-weighted subproblems, namely (a) a weighted graph-matching problem for LCT pairs, (b) a weighted shortest-path routing problem for each traffic flow, and (c) a linear program (LP) for rate allocation.
Based on the relaxation, we jointly optimize the LISL matching and flow routing decisions by adjusting the multipliers, e.g., via subgradient descent \cite{boyd2003subgradient}.
Our contributions are as follows.
\begin{itemize}
  \item \blue{We provide a comprehensive formulation of the problem \eqref{eq:prob:constrainted_routing_and_matching:primal}, including the LCTs' steering ranges and vibrations, and the diverse traffic profile across the constellation. Moreover, we discuss the complexity of the joint problem through the induced fixed-snapshot abstract decision problem. Specifically, we define this abstract decision problem and prove that it is NP-hard, via a direct reduction from integral 2-commodity flow in symmetric digraphs \cite{jarry2012multiflows} (\textbf{Proposition~\ref{prop:np_hardness}}).}
  \item We relax the link rate limitation constraints of satellite pairs in \eqref{eq:prob:constrainted_routing_and_matching:relaxed} and  \eqref{eq:prob:constrainted_routing_and_matching:dual}, separating the original problem into subproblems (i.e., (a), (b), and (c) in \eqref{eq:prob:constrainted_routing_and_matching:lagrangian_dual}) that are solved independently at polynomial complexity.
  \item Building on this, we design a Lagrangian dual-based subgradient-descent algorithm (\textbf{DuJo}) and prove that the set of optimal Lagrange multipliers and the subgradient are bounded, guaranteeing the convergence of the algorithm to the optimal multipliers (\textbf{Theorem} \ref{theorem:convergence:classic}).
  \item With simulations based on real-world constellation and terrestrial information, we show that our method \textbf{DuJo} can improve the network throughput in Starlink-based constellations approximately by up to $145\%$ compared to a grid-aligned LISL scheme and up to $35\%$ compared to a scheme prioritizing high-capacity LISLs.
\end{itemize}

\subsection{Notations and Paper Organization} 
We denote the $i$-th element of a vector $\mathbf{x}$ by $x_i$ and the $(i,j)$-th element of a matrix $\mathbf{x}$ by $x_{i,j}$.
Table~\ref{tab:notation} summarizes the notations used in this paper.
We collect all the elements in $\mathbf{x}$ as $\{x_{i,j}\}_{\forall(i,j)}$.
The rest of this paper is organized as follows. Section~\ref{sec:system_model} presents the system model of the LEO mega-constellation, the models of LCT mechanics, and traffic profiles.
Section~\ref{sec:problem_formulation} formulates the constellation graphs and the joint optimization problem over the graphs, and Section~\ref{sec:lagrangian_dual_relaxation} presents the Lagrangian dual relaxation of the joint problem.
Section~\ref{sec:simulation} shows the simulations evaluating the proposed methods, and Section~\ref{sec:conclusion} concludes this work.

\section{System Model of Mega-Constellation}\label{sec:system_model}
This section explains the system model of the mega-constellation connected by LISLs, as illustrated in Fig.~\ref{fig:leo_system_diagram}.

\subsection{Constellation Model}

The system time is denoted by $t$ in seconds, where $t=0$ is the initial time corresponding to a real-world time $\mathcal{T}_0$. 
The spatial domain is described in Cartesian coordinates, with the origin fixed at the center of the Earth \cite{vallado2022fundamentals}.
The satellites and the Earth are assumed to move with respect to the Earth-centered inertial (ECI) frame.
In the ECI frame, the direction of the vernal equinox is taken as the reference direction along the x-axis, and the z-axis is aligned with the Earth's rotation axis. The Earth rotates about the z-axis at an angular velocity of $7.2921 \times 10^{-5}$ rad/s. The Earth's radius is set to $6.3781 \times 10^6$ m.
We consider a mega-constellation with $I$ satellites that are collectively denoted as $\mathcal{I} = \{1,\dots,I\}$.
Each satellite has a near-circular orbit described by two-line element sets (TLEs).
The position and velocity of satellite $i \in \mathcal{I}$ at time $t$ are denoted by $\mathbf{l}_i(t)$ (m) and $\mathbf{v}_i(t)$ (m/s) in the ECI frame, respectively, and are determined by the TLEs.
The distance and direction vector from satellite $i$ to $j$ are denoted by $z_{i,j}(t)$ and $\mathbf{d}_{i,j}(t)$, respectively, which are given by
\begin{equation}
  \begin{aligned}
    z_{i,j}(t) = \|\mathbf{l}_i(t) - \mathbf{l}_j(t)\|\ 
    \text{and}\ 
    \mathbf{d}_{i,j}(t) = \frac{\mathbf{l}_j(t) - \mathbf{l}_i(t)}{z_{i,j}(t)}.
  \end{aligned}
\end{equation}

\subsection{Satellite Form Factor and LCT Mechanics Models}
We model each satellite as a rigid body that can rotate in space by using its attitude control system \cite{vallado2022fundamentals}.
The satellites maintain their orientations, with their body frames always pointing toward the center of the Earth, i.e., in the direction of $-\mathbf{l}_i(t)$ at time $t$.
Each satellite $i$ is equipped with $N'$ LISL communication transceivers (LCTs), so the constellation has a total of $N=N'I$ LCTs. We denote the set of all LCTs by $\mathcal{N} = \{1,\dots,N\}$.
We write the index of the satellite that LCT $n$ belongs to as $i_n$, where $i_n \in \{1,\dots,I\}$. 
LCT $n$ has a mounting direction $\mathbf{u}_n(t)$ in the ECI frame, which is a unit vector fixed on the satellite body.
\blue{These mounting directions specify only the preferred pointing directions of the LCTs. They do not restrict feasible LISLs to satellites in the same orbital plane \cite{chaudhry2021laser,leyva-mayorga2021interplane}.}
Each LCT can steer its beam to point in any direction with an angle less than $\theta$ radians from its center mounting direction, e.g., by using steering mirrors. Here, this region is referred to as its field of regard (FOR).
Due to spacecraft dynamics and satellite motion, the beam pointing direction of the LCT can deviate from the desired direction because of small unpredictable errors, referred to as pointing jitters.
The pointing jitter in radians is assumed to follow a Rayleigh distribution \cite{arnon2003effects} as
\begin{equation}\label{eq:pointing_jitter}
  \begin{aligned}
    f_{\text{jitter}}(\nu) = \frac{\nu}{\sigma^2_{\mathrm{J}}} \exp\left(-\frac{\nu^2}{2\sigma^2_{\mathrm{J}}}\right), \nu \geq 0.
  \end{aligned}
\end{equation}
\blue{Each LCT's pointing error is characterized by the marginal Rayleigh distribution above. Since the per-link capacity derivation in the next subsection depends only on each LCT's marginal distribution \cite{kaymak2018survey,yang2024analysis}, the framework does not require pointing errors to be statistically independent or identically distributed across LCTs, and per-LCT jitter scales can be accommodated without changing the matching or routing algorithm.}

\subsection{LCT Beam Model}
Laser beams are modeled as Gaussian beams \cite{saleh2019fundamentalsa}. 
Each beam is configured with power $P_0$ watts (W).
The intensity of a propagating Gaussian beam is given by
\begin{equation}
  \begin{aligned}
\Phi(y,z)=\Phi_0\left(\frac{W_0}{W(z)}\right)^2 \exp\!\left(-\frac{2y^2}{W(z)^2}\right),
  \end{aligned}
\end{equation}
where $\Phi_0$ is the intensity at the center of the beam at the beam waist, $y$ is the radial distance from the beam's center, $W(z)$ is the beam radius at distance $z$ from the beam waist, and $W_0$ is the beam waist radius.
Here, the peak intensity $\Phi_0$ and the beam radius $W(z)$ are modeled as
\begin{equation}
  \begin{aligned}
    \Phi_0 = \frac{2P_0}{\pi W_0^2}, \
    W(z)=W_0\sqrt{1+\left(\frac{z}{z_\mathrm{R}}\right)^2},
  \end{aligned}
\end{equation}
where $z_\mathrm{R}$ is the Rayleigh range of the beam.
Data over LISLs are transmitted using an intensity-modulation and direct-detection scheme, e.g., ON-OFF keying (OOK). While the closed-form expression for the channel capacity cannot be derived, lower bounds on the capacity have been studied.
Specifically, given $y$ and $z$, the channel capacity of the LISL is lower bounded as \cite{farid2007outage,lapidoth2009capacity,chaaban2022capacity}
\begin{equation}
  \begin{aligned}
    C(y,z) \approx \frac{B}{2}\log_2
    \left(1+\frac{\left( A \cdot \Phi(y,z)\cdot \Psi  \right)^2}{2\pi e \sigma^2_{\mathrm{N}}} \right),
  \end{aligned}
\end{equation}
where the unit is bits per second (bps), $\Psi$ is the responsivity of the LCT receiving optical sensor in ampere/watt (A/W), $B$ is the bandwidth of the receiver in Hertz (Hz), $A$ is the aperture area of the LCTs in m$^2$, and $\sigma_{\mathrm{N}}$ is the noise amplitude in ampere (A) \cite{maxim_spf_transimpedance}.
For sufficiently small pointing jitter angles, the radial distance $y$ can be approximated as $y \approx z \nu$. 
The capacity of the LISL, as a function of the pointing jitter $\nu$ and the satellite distance $z$, can be written as $C(z \nu,z)$.
\blue{
\begin{rem}
Our formulation performs per-snapshot joint design of the matching, routing, and rate allocation at a fixed time $t$. It therefore does not internally optimize acquisition, pointing, and tracking (ATP) transition overheads that arise when the matching changes across consecutive snapshots \cite{kaymak2018survey,li2011analytical,scheinfeild2000acquisition}. At snapshot intervals on the order of seconds, such ATP overheads are not negligible; for example, optical inter-satellite link experiments have reported initial acquisition times on the order of tens of seconds, with shorter times after optimization \cite{smutny2009gbps}. We evaluate the impact of non-zero ATP time in Section~\ref{subsubsec:atp_transition_evaluation}. A fully ATP-aware optimizer can be cast as an MDP on the matching state in which our current three-way Lagrangian decomposition is preserved at each snapshot; this is consistent with recent time-dependent LEO topology optimization work that explicitly accounts for link churn over time \cite{ron2025time}, and details are provided in the appendix as a future extension.
\end{rem}
}
\blue{
\begin{rem}
From a system-operation perspective, the optimization is intended to be carried out centrally by the ground segment or network-control segment rather than independently onboard the satellites \cite{3gpp38821,guidotti2024role}. For each snapshot, the controller uses predicted ephemerides, gateway availability, and traffic estimates to compute the matching, routing, and rate-allocation decisions, and then disseminates the resulting schedule to the satellites for execution. In real deployments, this workflow would also face control-plane latency, imperfect traffic or state estimates, and robustness challenges under snapshot-to-snapshot disturbances or link interruptions \cite{guidotti2024role,wu2025enhancing}. Hierarchical or distributed implementations, together with robust or online re-optimization under delayed or imperfect state information, are therefore important future directions.
\end{rem}
}


\begin{figure}[t]
  \centering
  \includegraphics[scale=0.9]{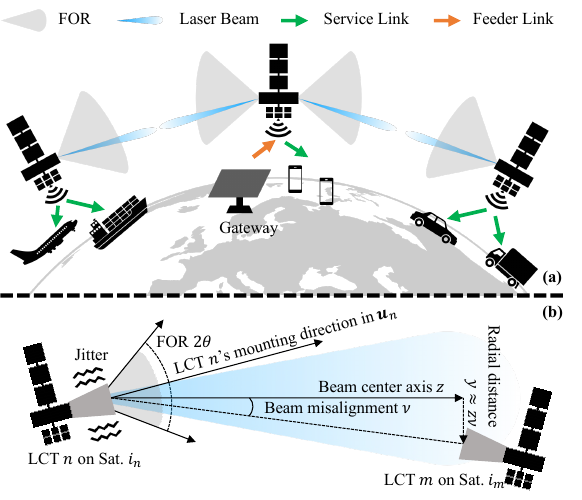}
  \vspace{-0.2cm}
  \caption{Illustration of (a) the LEO satellite mega-constellation and (b) the LCT and beam model between two LCTs.}
  \label{fig:leo_system_diagram}
  \vspace{-0.25cm}
\end{figure}

\subsection{Traffic Profile Model}\label{subsec:traffic_profile_model}

As the focus of this paper is on LCT management in space, we consider a simplified scenario where the constellation serves traffic from gateways to users under its coverage over the globe.
\blue{Accordingly, the present traffic model is intended to capture snapshot-level spatial heterogeneity through the uneven distributions of population and gateways, rather than the full demand-forecasting problem.}

It is assumed that each user is downloading data from the Internet, e.g., for video streaming via the service links. We assume that each satellite $i$ has a random number of users, denoted by $U_i(t)$, within its coverage area at time $t$, and each user has a traffic demand of $D$ Gbps.
We assume that there are multiple ground gateways scattered across the Earth, and each satellite can serve up to $Q$ Gbps of traffic if a ground gateway lies within its coverage area. 
\blue{Here, $Q$ denotes the gateway service capacity available between that satellite and the terrestrial Internet. It does not include LISL transit capacity, which is constrained separately by the aggregate capacities of the established optical links.}
Note that with a ground gateway connection, a satellite can not only serve the traffic demands of the users under its coverage but also serve traffic from other satellites in the constellation when they are connected by LISLs.
\blue{Each satellite first uses this gateway service capacity to serve the traffic demand of its own users under its coverage, i.e., $U_i(t) D$ Gbps for satellite $i$, and then uses the remaining gateway capacity to serve the traffic demand of other satellites.}
Specifically, by subtracting the traffic demand of the users under its own coverage, each satellite $i$ can serve the traffic demand of other satellites up to
\begin{equation}
  \begin{aligned}
    Q_i(t) = \begin{cases}
      (Q - U_i(t) D)_{+},\ &\text{if satellite $i$ has a gateway},\\
      0,\ &\text{otherwise},
    \end{cases}    
  \end{aligned}
\end{equation}
which is referred to as the traffic serving rate of satellite $i$ for the constellation. Here, $(x)_{+} = \max\{x,0\}$. 
Similarly, the traffic demand of satellite $i$ is the remaining part of the traffic demand of the users under its coverage that is not served by its own gateway connection (if it has one) and is denoted by
\begin{equation}
  \begin{aligned}
    D_i(t) = \begin{cases}
      (U_i(t) D-Q)_{+},\ &\text{if satellite $i$ has a gateway},\\
      U_i (t)D,\ &\text{otherwise},
    \end{cases}    
  \end{aligned}
\end{equation}
which is referred to as the traffic demand rate of satellite $i$ to the constellation.
Here, positive traffic demand of a satellite needs to be served by other satellites with a ground gateway connection via the LISLs.
\blue{This simplified model does not incorporate temporal demand variation or socioeconomic modifiers such as purchasing capability. Richer broadband-demand models along these dimensions are important future work \cite{zamacola2025profiling,qin2024satformer}. Gateway-aware traffic estimation is also relevant to this broader modeling direction \cite{guo2021gateway}.}
Satellites with positive serving rates or positive demand rates are referred to as serving and demanding satellites, respectively.

\section{Problem Formulation}\label{sec:problem_formulation}
We consider the constellation at a given time $t$, and we omit $t$ for simplicity in the rest of the paper. 
The task is to maximize the network throughput in the constellation by optimizing 1) the connecting pairs of LCTs, 2) the traffic flow allocation between serving and demanding satellites, and 3) the routing of the traffic flows through the established LISLs.
To construct the problem, we first define three independent feasible domains constraining the above groups of decision variables, and then formulate the maximum link rate limitation constraints that couple these decision variables.

\subsection{\texorpdfstring{\blue{LCT Visibility and Connection Constraints}}{LCT Visibility and Connection Constraints}}\label{subsec:lct_visibility_connection}
The LCTs $\mathcal{N}$ and all connectable LCT pairs $\mathcal{E}$ are modeled as a graph $\mathcal{G}^\text{LCT} = (\mathcal{N},\mathcal{E})$, where the vertices are the LCTs and the edges are connectable LCT pairs.
\blue{The visibility condition for a potential LISL is determined by the maximum beam transmission distance $\hat{z}$ and the mutual field of regard test at the two LCTs.}
\blue{Accordingly, the set $\mathcal{E}$ of all connectable LCT pairs is defined by the following visibility condition.}
\begin{equation}\label{eq:lct_visibility_condition}
  \begin{aligned}
    &\mathcal{E} = \big\{\{n,m\}\ \big| \ i_n \neq i_m, z_{i_n,i_m} \leq \hat{z}, \\
    &\qquad\qquad \mathbf{d}_{i_n,i_m} \cdot \mathbf{u}_n > \cos \theta,\mathbf{d}_{i_m,i_n} \cdot \mathbf{u}_m > \cos \theta \big\}.
  \end{aligned}
\end{equation}
\blue{Therefore, the model does not impose a same-plane restriction. Both intra-plane and inter-plane LISLs can be feasible whenever they satisfy this visibility condition.}
\blue{Note that this visibility model is simplified. We do not model atmospheric effects here because the present LISL formulation focuses on space-to-space propagation. We also do not impose an explicit line-of-sight or Earth-occlusion constraint, so the current connectability approximation is driven only by the range bound and the mutual field of regard test \cite{vallado2022fundamentals,leyva-mayorga2021interplane,yang2024analysis}.}
Denote by $c_{n,m}$ the binary indicator of whether the LISL from LCT $n$ to $m$ is established. If the LISL is established from LCT $n$ to $m$, it is also established from LCT $m$ to $n$, which can be mathematically expressed as
\begin{equation}\label{eq:const:link:binary_bidirectional_connection}
  \begin{aligned}
    c_{n,m}\in \{0,1\},\ c_{n,m} = c_{m,n} ,\ \forall n\neq m.
  \end{aligned}
\end{equation}
Each LCT can connect to at most one other LCT at a time, i.e.,
\begin{equation}\label{eq:const:link:one_connection}
  \begin{aligned}
    \sum_{m\in\mathcal{N}} c_{n,m} \leq 1,\ \forall n\in\mathcal{N}.
  \end{aligned}
\end{equation}
By collecting indicators $c_{n,m}$ as a vector  $\mathbf{c} = \{c_{n,m}\}_{(n,m)\in\mathcal{E}}$, the feasible domain of LCT connections is defined as
\begin{equation}
  \begin{aligned}
    \mathcal{C} = \left\{ \mathbf{c} \ \big|\   \eqref{eq:const:link:binary_bidirectional_connection},\eqref{eq:const:link:one_connection}\right\}.
  \end{aligned}
\end{equation}

\subsection{Traffic Serving and Demand Rates Constraints}
When there is no restriction, a demanding satellite $s'$ can be served by any satellite with a ground gateway connection. This will lead to a significant number of traffic source-target pairs, increasing the problem complexity.
To reduce the number of traffic source-target satellite pairs, we assume that a demanding satellite $s'$ can only be served by its $M$ nearest satellites via LISLs.
We denote the $M$ nearest satellites that can serve traffic for satellite $s'$ by $\mathrm{TopM}_{s:Q_s > 0} -z_{s,s'}$, where $z_{s,s'}$ is the distance from satellite $s$ to $s'$.
We collect all the traffic source and target satellite pairs in a set $\mathcal{F}$ as
\begin{equation}
  \begin{aligned}
    \mathcal{F} = \{(s,s')\big|D_{s'} > 0; s \in \mathrm{TopM}_{s:Q_s > 0} -z_{s,s'} \}.
  \end{aligned}
\end{equation}
We denote the amount of traffic demand of satellite $s'$ served by satellite $s$ as $q^{s,s'}$.
For all source satellites $s$ and target satellites $s'$, the traffic flow rate $q^{s,s'}$ is bounded by the traffic serving and demand rates of the satellites, i.e.,
\begin{equation}\label{eq:const:traffic:flow_rate}
  \begin{aligned}
    \sum_{s':(s,s')\in\mathcal{F}}  q^{s,s'} \leq Q_s,\forall s\in\mathcal{I};\sum_{s:(s,s')\in\mathcal{F}}  q^{s,s'} \leq D_{s'}, \forall s'\in\mathcal{I}.
  \end{aligned}
\end{equation}
We collect all traffic flow rates as $\mathbf{q} = \{q^{s,s'}\}_{(s,s')\in\mathcal{F}}$. The set of all feasible non-negative traffic flow rates is given by
\begin{equation}
  \begin{aligned}
    \mathcal{Q} = \{\mathbf{q} \ \big |\  \eqref{eq:const:traffic:flow_rate};\ q^{s,s'} \geq 0,\ \forall (s,s')\in\mathcal{F}\}.
  \end{aligned}
\end{equation}

\subsection{Traffic Routing Constraints}
The traffic will be routed among neighboring satellites.
We refer to two satellites as neighbors if they have connectable LCTs. The set of all neighboring satellite pairs $i$ and $j$ is denoted by
\begin{equation}
  \begin{aligned}
    \mathcal{L} = \{(i,j)\ \big|\  \mathcal{E}_{i,j}\neq \emptyset \},
  \end{aligned}
\end{equation}
where $\mathcal{E}_{i,j}$ collects the connectable LCT pairs between $i$ and $j$ as
\begin{equation}
  \begin{aligned}
    \mathcal{E}_{i,j} = \big\{\{n,m\}\big| i_n = i, i_m = j,\{n,m\}\in\mathcal{E}\big\}, \forall \{i, j\} \in \mathcal{I}.
  \end{aligned}
\end{equation}
Denote by $x^{s,s'}_{i,j}$ the binary indicator of whether the link from satellite $i$ to $j$ is included in the flow path from satellite $s$ to $s'$. 
The routing path decisions are subject to the flow conservation constraints on the graph $\mathcal{G}^\text{SAT} = (\mathcal{I},\mathcal{L})$ as
\begin{equation}\label{eq:const:path:flow_constraints}
\begin{aligned}
  \sum_{j:(s,j)\in \mathcal{L}} x^{s,s'}_{s,j} - \sum_{j:(j,s)\in \mathcal{L}} x^{s,s'}_{j,s} & = 1, \quad \text{(source flow)}
  \\
  \sum_{j:(s',j)\in \mathcal{L}} x^{s,s'}_{s',j} - \sum_{j:(j,s')\in \mathcal{L}} x^{s,s'}_{j,s'} & = -1, \quad \text{(target flow)} 
  \\
  \sum_{j:(i,j)\in \mathcal{L}} x^{s,s'}_{i,j} - \sum_{j:(j,i)\in \mathcal{L}} x^{s,s'}_{j,i} & = 0, \quad \forall i \in \mathcal{I} \setminus \{s,s'\}.
\end{aligned}
\end{equation}
By denoting $\mathbf{x}^{s,s'} = \{x^{s,s'}_{i,j}\}_{(i,j)\in\mathcal{L}}$ as the routing decisions from satellite $s$ to $s'$, its feasible domain is defined as
\begin{equation}\label{eq:const:path:feasible_paths}
  \begin{aligned}
    \mathcal{X}^{s,s'} = \{\mathbf{x}^{s,s'} \ \big |  \eqref{eq:const:path:flow_constraints};\ x^{s,s'}_{i,j}\in\{0,1\},\forall (i,j)\in\mathcal{L}\}.
  \end{aligned}
\end{equation}
In addition, let $\mathbf{x} = \{\mathbf{x}^{s,s'}\}_{(s,s')\in\mathcal{F}}$ be routing decisions of all flows, and its feasible domain is denoted as
\begin{equation}
  \begin{aligned}
    \mathcal{X} = \{\mathbf{x} \ \big | \ \mathbf{x}^{s,s'} \in \mathcal{X}^{s,s'},\forall (s,s')\in\mathcal{F}\}.
  \end{aligned}
\end{equation}

\subsection{LISL Maximum Data Rate Constraints}
Due to the pointing jitter, the capacity of the LISL will fluctuate as the receiving LCT's radial distance to the center of the beam changes.   
The maximum data rate on a given LISL is determined by the pointing jitter and the distance between the two LCTs $n$ and $m$ as \cite{farid2007outage,lapidoth2009capacity,chaaban2022capacity}
\begin{equation}\label{eq:rate_configuration}
  \begin{aligned}
    r_{n,m} = \max_{C'} (1-\Pr\{C(z_{i_n,i_m}\nu,z_{i_n,i_m}) < C'\}) C',
  \end{aligned}
\end{equation}
where $1-\Pr\{C(z\nu,z) < C'\}$ is the outage probability that the link capacity falls below $C'$, and 
$r_{n,m}$ is the effective data rate of the LISL averaged over time, considering the outages.
Since the satellite distance $z$ is approximately constant over a short time, the outage probability can be written in terms of the jitter as $\Pr\{C(z\nu,z) < C'\} = \Pr\{\nu > \hat{\nu}\}$, where $C' = C(z\hat{\nu},z)$.
Here, $\hat{\nu}$ is the threshold of the pointing jitter above which the link capacity is less than $C'$. By requiring the outage probability to be less than a threshold $\epsilon$, i.e., $\Pr\{\nu > \hat{\nu}\} \leq \epsilon$, the rate optimization in \eqref{eq:rate_configuration} becomes finding a misalignment threshold $\hat{\nu}$, where $\Pr\{\nu > \hat{\nu}\}$ is less than $\epsilon$.
Since the misalignment is Rayleigh distributed in \eqref{eq:pointing_jitter}, the threshold is given by $\hat{\nu} = \sigma_{\mathrm{J}} \sqrt{-2\ln(\epsilon)}$.
Substituting $\hat{\nu}$, the maximum link data rate between the LCTs $n$ and $m$ can be approximated as
\begin{equation}
  \begin{aligned}
    r_{n,m} \approx (1-\epsilon) C(z_{i_n,i_m}\sigma_{\mathrm{J}} \sqrt{-2\ln(\epsilon)},z_{i_n,i_m}), \ \forall \{n,m\} \in \mathcal{E}.
  \end{aligned}
\end{equation}
The link-rate constraint is that the sum of all traffic flows routed from satellite $i$ to $j$ must not exceed the aggregate rates of the established LISLs between $i$ and $j$, i.e.,
\begin{equation}\label{eq:const:path:flow_rate}
  \begin{aligned}
    \sum_{(s,s')\in\mathcal{F}} q^{s,s'} x^{s,s'}_{i,j} \leq \sum_{\{n,m\}\in\mathcal{E}_{i,j}} r_{n,m} c_{n,m},\ \forall (i,j)\in\mathcal{L}.
  \end{aligned} 
\end{equation}

\subsection{Joint Link Matching and Flow Routing Problem}
Based on the above definitions of the feasible domains of the decision variables and the link-rate constraint \eqref{eq:const:path:flow_rate}, we can formulate the joint link matching and flow routing problem to maximize the network throughput, $\sum_{(s,s')\in\mathcal{F}} q^{s,s'}$, as
\begin{equation}\label{eq:prob:constrainted_routing_and_matching:primal}
  \begin{aligned}
  \min_{
    \substack{
      \mathbf{c}\in\mathcal{C};\mathbf{q}\in\mathcal{Q};\mathbf{x}\in\mathcal{X}
    }
  }
  -\sum_{(s,s')\in\mathcal{F}} q^{s,s'}, \ \text{s.t.}\ \eqref{eq:const:path:flow_rate},
  \end{aligned}
\tag{\bf{P1}}
\end{equation}
where $\mathcal{C}$, $\mathcal{Q}$, and $\mathcal{X}$ are the feasible domains for the LCT connections, traffic flow rates, and routing paths, respectively.
Here, we write the problem in the standard minimization form, i.e., to maximize $\sum_{(s,s')} q^{s,s'}$, we minimize its negative.
\blue{Problem~\eqref{eq:prob:constrainted_routing_and_matching:primal} is the physical mixed-integer formulation built from the geometric/orbital system model. To analyse its hardness, we use the standard threshold decision version of this optimization problem, since any exact solver for the optimization problem could answer the threshold question by comparing the optimum throughput with $\zeta$. For a fixed snapshot and a target throughput $\zeta$, we keep the same feasible set as~\eqref{eq:prob:constrainted_routing_and_matching:primal}, namely $\mathbf{c}\in\mathcal{C}$, $\mathbf{x}\in\mathcal{X}$, $\mathbf{q}\in\mathcal{Q}$, and the link-rate constraint \eqref{eq:const:path:flow_rate}. The decision question asks whether this feasible set contains any solution whose served throughput is at least $\zeta$. This fixed-snapshot threshold decision problem is the abstract snapshot joint matching and routing (AS-JMR) problem defined as follows.}

{\color{blue}
\begin{defi}[AS-JMR]
For fixed snapshot inputs $(\mathcal{I},\mathcal{N},\{i_n\}_{n\in\mathcal{N}},\mathcal{E},\mathcal{F},\{r_{n,m}\}_{\{n,m\}\in\mathcal{E}},\{Q_i\}_{i\in\mathcal{I}},\{D_i\}_{i\in\mathcal{I}},\zeta)$, with the same meanings as in~\eqref{eq:prob:constrainted_routing_and_matching:primal}, the AS-JMR decision problem asks whether there exist $\mathbf{c}\in\mathcal{C}$, $\mathbf{q}\in\mathcal{Q}$, and $\mathbf{x}\in\mathcal{X}$ satisfying \eqref{eq:const:path:flow_rate} and $\sum_{(s,s')\in\mathcal{F}}q^{s,s'}\geq\zeta$.
\end{defi}

\begin{prop}\label{prop:np_hardness}
The decision problem AS-JMR is NP-hard.
\end{prop}

\begin{proof}
  The proof is listed in the appendix.
\end{proof}

\begin{rem}
\blue{Proposition~\ref{prop:np_hardness} establishes NP-hardness for the fixed-snapshot AS-JMR decision problem with combinatorial snapshot inputs. It does not establish NP-hardness for every geometric/orbital instance of~\eqref{eq:prob:constrainted_routing_and_matching:primal}, whose connectable pairs, source-target pairs, and link rates are induced by orbital geometry, field-of-regard constraints, distance thresholds, and traffic-pair construction rules. Extending the hardness result to the full geometric/orbital formulation would require a realization argument that maps AS-JMR inputs to feasible geometric/orbital instances.
In extreme cases, the geometric/orbital constraints could yield trivial instances of~\eqref{eq:prob:constrainted_routing_and_matching:primal} that are not NP-hard, e.g., when there are no connectable LCT pairs or no source-target pairs.}
\end{rem}
\blue{Nevertheless, the NP-hardness of AS-JMR implies that the difficulty of the joint link matching and flow routing problem is inherent in the combinatorial nature of the problem, which motivates us to design efficient algorithms based on Lagrangian dual relaxation to solve the problem in the following section.}
}
\section{Joint Link Matching and Flow Routing via Lagrangian Duality}\label{sec:lagrangian_dual_relaxation}
From the structure of \eqref{eq:prob:constrainted_routing_and_matching:primal}, we observe that the maximum link-rate constraints \eqref{eq:const:path:flow_rate} couple the decisions on the LCT connections, traffic flow rates, and routing paths, while the remaining constraints are independent and define the separate feasible domains $\mathcal{C}$, $\mathcal{Q}$, and $\mathcal{X}$.
Therefore, lifting \eqref{eq:const:path:flow_rate} to the objective function with the Lagrange multipliers decouples the constraints on the LCT connections, the flow rate allocation, and the routing.
These decisions affect each other via the Lagrange multipliers that penalize links with violations of their maximum link-rate constraints. 
A larger violation on a link implies that it will have a high traffic load, which further indicates that the link should be prioritized for connection but avoided in traffic-flow routing paths.
By adjusting the Lagrange multipliers, e.g., via subgradient descent, we can jointly optimize the LCT connections, flow rates, and routing paths, while making these decisions separately.

\subsection{Lagrangian Dual Relaxation}\label{subsec:lagrangian_dual_relaxation}
\begin{figure*}[t]
\begin{align}
      g(\lambda)&= \min_{
        \substack{
          \mathbf{c}\in\mathcal{C};\mathbf{q}\in\mathcal{Q};\mathbf{x}\in\mathcal{X}
        }
      }
      L(\mathbf{c}, \mathbf{q}, \mathbf{x}, \lambda) = \min_{
        \substack{
          \mathbf{c}\in\mathcal{C};\mathbf{q}\in\mathcal{Q};\mathbf{x}\in\mathcal{X}
        }
      } -\sum_{(s,s')\in\mathcal{F} } q^{s,s'} + \sum_{(i,j)}\lambda_{i,j} (\sum_{(s,s')\in \mathcal{F}} q^{s,s'} x^{s,s'}_{i,j} - \sum_{\{n,m\}\in\mathcal{E}_{i,j}} r_{n,m} c_{n,m}) \notag
      \\
      &= \min_{
        \substack{
           \mathbf{q}\in \mathcal{Q};\mathbf{x}\in\mathcal{X}
        }
      }
      \Big\{
      \sum_{(s,s')\in\mathcal{F} } q^{s,s'} (-1 + \sum_{(i,j)\in\mathcal{L}}\lambda_{i,j} x^{s,s'}_{i,j})   
      \Big\} 
      +\min_{
        \substack{
          \mathbf{c}\in\mathcal{C}
        }
      }
      \Big\{
      -\sum_{\{n,m\}\in\mathcal{E}} (\lambda_{i_n,i_m} + \lambda_{i_m,i_n})  r_{n,m} c_{n,m}
      \Big\} \notag
      \\
      &= 
      -\underbrace{\max_{
        \substack{
          \mathbf{q}\in \mathcal{Q}
        }
      }
      \Big\{
      \sum_{(s,s')\in\mathcal{F} } q^{s,s'} (1 - 
      \overbrace{\min_{
        \substack{
          \mathbf{x}^{s,s'}\in\mathcal{X}^{s,s'}
        }
      }
      \sum_{(i,j)\in\mathcal{L}}\lambda_{i,j} x^{s,s'}_{i,j}}^{
      \substack{\text{(b) Minimum cost routing}}
      })
      \Big\}}_{\substack{\text{(c) Linear weighted flow rate maximization}}}
      -
      \overbrace{\max_{
        \substack{
          \mathbf{c}\in\mathcal{C}
        }
      }
      \Big\{
      \sum_{\{n,m\}\in\mathcal{E}} (\lambda_{i_n,i_m} + \lambda_{i_m,i_n})  r_{n,m} c_{n,m}
      \Big\}
      }^{
      \substack{\text{(a) Maximum weight LISL matching}}
      }. \label{eq:prob:constrainted_routing_and_matching:lagrangian_dual}
  \end{align}
  \hrule
\end{figure*}

Specifically, the Lagrangian-augmented objective of \eqref{eq:prob:constrainted_routing_and_matching:primal} after relaxing the maximum link rate constraints \eqref{eq:const:path:flow_rate} is
\begin{equation}\label{eq:prob:constrainted_routing_and_matching:lagrangian}
  \begin{aligned}
      &L(\mathbf{c}, \mathbf{q}, \mathbf{x}, \lambda) = -\sum_{(s,s')\in\mathcal{F} } q^{s,s'} \\
      &\qquad + \sum_{(i,j)}\lambda_{i,j} \Big(\sum_{(s,s')\in \mathcal{F}} q^{s,s'} x^{s,s'}_{i,j} - \sum_{\{n,m\}\in\mathcal{E}_{i,j}} r_{n,m} c_{n,m} \Big) ,
  \end{aligned}
\end{equation}
where $\lambda=\{\lambda_{i,j}\}_{(i,j)\in\mathcal{L}}$ is the set of Lagrange multipliers (or dual variables) for all neighboring satellite pairs $(i,j)\in\mathcal{L}$.
The relaxed version of \eqref{eq:prob:constrainted_routing_and_matching:primal} is to minimize the Lagrangian-augmented objective $L(\mathbf{c}, \mathbf{q}, \mathbf{x}, \lambda)$ over the decision variables' feasible domains as
\begin{equation}\label{eq:prob:constrainted_routing_and_matching:relaxed}
  \begin{aligned}
  \{  \hat{\mathbf{q}}(\lambda), \hat{\mathbf{x}}(\lambda), \hat{\mathbf{c}}(\lambda) \}= \argmin_{
      \substack{
        \mathbf{c}\in\mathcal{C};\mathbf{q}\in\mathcal{Q};\mathbf{x}\in\mathcal{X}
      }
    }
    L(\mathbf{c}, \mathbf{q}, \mathbf{x}, \lambda),
  \end{aligned}
  \tag{\textbf{P2}}
\end{equation}
where $\hat{\mathbf{q}}(\lambda)$, $\hat{\mathbf{x}}(\lambda)$, and $\hat{\mathbf{c}}(\lambda)$ are the optimal decisions given $\lambda$ for the relaxed problem.
The optimal value of the relaxed problem \eqref{eq:prob:constrainted_routing_and_matching:relaxed} given $\lambda$ is referred to as the dual function $g(\lambda)$, which is expressed as \eqref{eq:prob:constrainted_routing_and_matching:lagrangian_dual}.
The dual problem is to maximize the dual function $g(\lambda)$ over the Lagrange multipliers $\lambda$ as
\begin{equation}\label{eq:prob:constrainted_routing_and_matching:dual}
  \begin{aligned}
    \max_{\lambda\geq 0} g(\lambda) = \max_{\lambda\geq 0} \min_{
      \substack{
        \mathbf{c}\in\mathcal{C};\mathbf{q}\in\mathcal{Q};\mathbf{x}\in\mathcal{X}
      }
    }
    L(\mathbf{c}, \mathbf{q}, \mathbf{x}, \lambda),
  \end{aligned}
\tag{\textbf{P3}}
\end{equation}
which is a concave maximization problem whose optimal value is a lower bound for the original problem \eqref{eq:prob:constrainted_routing_and_matching:primal} \cite{fisher2004lagrangian}.

\subsection{Subgradient Descent for Dual Problem}\label{subsec:subgradient_descent}
The dual function $g(\lambda)$ is concave in $\lambda$, so the dual problem \eqref{eq:prob:constrainted_routing_and_matching:dual} is a concave maximization problem \cite{boyd2004convex} that can be solved by subgradient descent \cite{boyd2003subgradient}.
To solve \eqref{eq:prob:constrainted_routing_and_matching:dual}, the method iteratively updates the Lagrange multipliers $\lambda$ based on the supergradient of $g(\lambda)$ by solving the relaxed problem \eqref{eq:prob:constrainted_routing_and_matching:relaxed} (since $g(\lambda)$ is concave, the supergradient is used), which approximates the minimum of the original problem \eqref{eq:prob:constrainted_routing_and_matching:primal}.
For given Lagrange multipliers in the $k$-th iteration, denoted by $\lambda^{[k]}$, $k=1,\dots,K$, the supergradient of $g(\lambda^{[k]})$ is
\begin{equation}\label{eq:dual_function_subgradient}
  \begin{aligned}
   \delta(\lambda^{[k]})_{i,j}\!=\!\!\!
   \sum_{(s,s')\in\mathcal{F}}\!\!\! \hat{q}^{s,s'}(\lambda^{[k]}) \hat{x}^{s,s'}_{i,j}(\lambda^{[k]}) - \!\!\!\!\!\!\sum_{\{n,m\}\in\mathcal{E}_{i,j}} \!\!\!\!r_{n,m} \hat{c}_{n,m}(\lambda),
  \end{aligned}
\end{equation}
where $\hat{q}^{s,s'}(\lambda^{[k]})$, $\hat{x}^{s,s'}_{i,j}(\lambda^{[k]})$, and $\hat{c}_{n,m}(\lambda^{[k]})$ are the optimal decisions given $\lambda^{[k]}$ in \eqref{eq:prob:constrainted_routing_and_matching:relaxed}.
To obtain these optimal decisions, we observe that the relaxed problem \eqref{eq:prob:constrainted_routing_and_matching:relaxed} can be decomposed into three independent subproblems, which can be solved separately, as shown in \eqref{eq:prob:constrainted_routing_and_matching:lagrangian_dual}.

In detail, part (a) of \eqref{eq:prob:constrainted_routing_and_matching:lagrangian_dual} is an MWM problem on $\mathcal{G}^\text{LCT}=(\mathcal{N},\mathcal{E})$ with the edge weights $ \{\lambda^{[k]}_{i_n,i_m}r_{n,m}\}_{\{n,m\}\in\mathcal{E}}$, solved as
\begin{equation}\label{eq:routine:mwm_given_lambda}
  \begin{aligned}
    \hat{\mathbf{c}}(\lambda^{[k]}) = \mathrm{MWM}(\mathcal{N}, \mathcal{E}, \{\lambda^{[k]}_{i_n,i_m}r_{n,m}\}_{\{n,m\}\in\mathcal{E}}).
  \end{aligned}
\end{equation}
Here, the matching is heuristically approximated by greedy weight matching, which sequentially matches the LCT pairs from higher to lower weights.
The part (b) is a minimum cost path routing problem that can be efficiently solved by the shortest path first (SPF) algorithm, e.g., Dijkstra's algorithm.
Routing is performed on the graph $\mathcal{G}^\text{SAT}=(\mathcal{I}, \mathcal{L})$ with the edge weights $\lambda^{[k]}$ for all source-target satellite pairs as
\begin{equation}\label{eq:routine:spf_given_lambda}
  \begin{aligned}
    \hat{\mathbf{x}}^{s,s'}(\lambda^{[k]}) = \mathrm{SPF}(s,s',\mathcal{I}, \mathcal{L}, \lambda^{[k]}),\ \forall (s,s')\in\mathcal{F}.
  \end{aligned}
\end{equation}
After obtaining the routing costs $\sum_{(i,j)\in\mathcal{L}}\lambda^{[k]}_{i,j} \hat{x}^{s,s'}_{i,j} (\lambda^{[k]})$, the part (c) computes the rates of all source-target satellite pairs by solving the flow rate maximization (FRM) problem as
\begin{equation}\label{eq:routine:flow_rate_given_routing_cost}
  \begin{aligned}
    \hat{\mathbf{q}}(\lambda^{[k]}) = \argmax_{
        \substack{
          \mathbf{q}\in\mathcal{Q}
        }
      }
      \sum_{(s,s')\in\mathcal{F} } q^{s,s'} (1 - \sum_{(i,j)\in\mathcal{L}}\lambda^{[k]}_{i,j} \hat{x}^{s,s'}_{i,j} (\lambda^{[k]})),
  \end{aligned}
\end{equation}
where routing costs are included as the weights of the traffic flow rates in the objective. Higher routing costs imply that the flow is routed through links with higher violations, and thus that the flow rate should be reduced in the FRM.
Note that all constraints in \eqref{eq:const:path:flow_rate} as well as the objective function are linear in \eqref{eq:routine:flow_rate_given_routing_cost}, which implies that the problem is an LP that can be solved efficiently.
Finally, using the computed supergradient in \eqref{eq:dual_function_subgradient}, the Lagrange multipliers are updated as
\begin{equation}\label{eq:routine:dual_variable_update}
  \begin{aligned}
    \lambda^{[k+1]} = \big(\lambda^{[k]} + \alpha^{[k]} \delta(\lambda^{[k]})\big)_{+},
  \end{aligned}
\end{equation}
where $\alpha^{[k]}$ is the step size in the $k$-th iteration. The initial Lagrange multipliers are $\lambda^{[1]} = \mathbf{0}$.
The step size $\alpha^{[k]}$ is designed to follow a diminishing rule, i.e., $\alpha^{[k]} = \frac{\alpha_0}{k^\beta}$, where $\alpha_0$ is a positive constant and $0.5\leq\beta<1$ controls the step sizes' decay rate. \blue{Because several flows may compete for the same bottleneck LISLs, the subgradient iterates need not be monotone: the dual variables and the converted primal solutions can oscillate before settling. The diminishing step size attenuates these oscillations asymptotically, so the convergence guarantee below is stated for the best-so-far dual iterate rather than for every individual iterate.} The algorithm is listed in Algorithm~\ref{alg:dual_optimization} and is referred to as the ``DuJo'' scheme.

\begin{algorithm}[!t]
\caption{Lagrangian Dual Optimization for Joint Link Matching and Traffic Routing (\textbf{DuJo}) in Mega-Constellation}\label{alg:dual_optimization}
\begin{algorithmic}[1]
\STATE Initialize the constellation and traffic information.
\STATE Initialize Lagrange multipliers $\lambda^{[1]} = \mathbf{0}$.
\FOR{ $k = 1,2,\dots,K$}
  \STATE Solve the MWM of the constellation as \eqref{eq:routine:mwm_given_lambda}.
  \STATE Solve the SPF on the constellation as \eqref{eq:routine:spf_given_lambda}.
  \STATE Solve the FRM for all flows as \eqref{eq:routine:flow_rate_given_routing_cost}.
  \STATE Compute the supergradient as \eqref{eq:dual_function_subgradient}.
  \STATE Update the Lagrange multipliers as \eqref{eq:routine:dual_variable_update}.
\ENDFOR
\STATE Convert the optimized Lagrange multipliers as \eqref{eq:rounding:mwm}-\eqref{eq:rounding:frm}.
\end{algorithmic}
\end{algorithm}

\subsection{Convergence and Complexity of Subgradient Descent}\label{subsec:subgradient_convergence_and_complexity_analysis}
The convergence of the subgradient descent in Algorithm~\ref{alg:dual_optimization} is guaranteed by the facts that 1) the initial Lagrange multipliers are not chosen arbitrarily far from the optimum, and 2) the subgradient is numerically stable with a bounded norm, which is proved in our case as follows.
\begin{lemma}\label{lemma:optimal_dual_variable_bounded}
    The distance from the initial Lagrange multipliers to the optimal ones and the maximum norm of the dual function's supergradient are both bounded by finite constants, i.e., there exist constants $0<\eta_1<\infty$ and $0<\eta_2<\infty$ such that
    \begin{equation}
      \begin{aligned}
        \|\lambda^{[1]}-\lambda^*\|^2 \leq \eta_1; \ \max_\lambda \|\delta(\lambda)\| \leq \eta_2.
      \end{aligned}
    \end{equation} 
\end{lemma}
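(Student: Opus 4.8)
The plan is to prove the two bounds separately. Since $\lambda^{[1]}=\mathbf 0$, the first inequality is just $\|\lambda^*\|^2\le\eta_1$, and it is the harder of the two; I would dispatch the supergradient bound first, as it is elementary. Each component $\delta(\lambda)_{i,j}$ in \eqref{eq:dual_function_subgradient} is the difference of two nonnegative quantities: the flow routed across link $(i,j)$, $\sum_{(s,s')}\hat q^{s,s'}(\lambda)\hat x^{s,s'}_{i,j}(\lambda)$, and the aggregate LISL rate opened on $(i,j)$, $\sum_{\{n,m\}\in\mathcal E_{i,j}}r_{n,m}\hat c_{n,m}(\lambda)$. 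The former is at most $\sum_{(s,s')}\hat q^{s,s'}(\lambda)\le\sum_{s\in\mathcal I}Q_s$ because $\hat{\mathbf q}(\lambda)\in\mathcal Q$ enforces $\sum_{s'}q^{s,s'}\le Q_s$; the latter is at most $N'r_{\max}$ with $r_{\max}:=\max_{\{n,m\}\in\mathcal E}r_{n,m}\le(1-\epsilon)\tfrac B2\log_2\!\bigl(1+\tfrac{(A\Phi_0\Psi)^2}{2\pi e\sigma_{\mathrm N}^2}\bigr)<\infty$, since the Gaussian-beam intensity satisfies $\Phi(y,z)\le\Phi_0$ so the capacity lower bound is largest at the beam center. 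As $\mathcal L$ is finite, $\|\delta(\lambda)\|\le\sqrt{|\mathcal L|}\,\max\bigl\{\sum_{s\in\mathcal I}Q_s,\,N'r_{\max}\bigr\}=:\eta_2$ for every $\lambda\ge\mathbf 0$.

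For $\|\lambda^*\|$ I would use a coercivity argument on the dual function. Write $g(\lambda)=-f(\lambda)-h(\lambda)$ from \eqref{eq:prob:constrainted_routing_and_matching:lagrangian_dual}, where $f(\lambda)\ge0$ is the optimal value of the weighted flow-rate LP (c) and $h(\lambda)\ge0$ that of the MWM problem (a) (nonnegativity of $f$: take $\mathbf q=\mathbf 0$; of $h$: every matching weight $(\lambda_{i_n,i_m}+\lambda_{i_m,i_n})r_{n,m}$ is nonnegative). Since each flow coefficient $1-\sum_{(i,j)}\lambda_{i,j}\hat x^{s,s'}_{i,j}$ in (c) is at most $1$ and $\mathbf q\in\mathcal Q$, one has $f(\lambda)\le\sum_{s\in\mathcal I}Q_s$ uniformly in $\lambda$. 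Evaluating at $\lambda=\mathbf 0$ gives $g(\mathbf 0)=-f(\mathbf 0)$, so optimality $g(\lambda^*)\ge g(\mathbf 0)$ forces $h(\lambda^*)\le f(\mathbf 0)-f(\lambda^*)\le\sum_{s\in\mathcal I}Q_s$. Note this inequality holds for every maximizer, so non-uniqueness of $\lambda^*$ is harmless.

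It then remains to show $h$ grows at least linearly. Because $h$ is a maximum over the finite set $\mathcal C$ of linear functions of $\lambda$ that vanish at the origin, it is continuous and positively homogeneous of degree one; hence on the compact set $\{\lambda\ge\mathbf 0:\|\lambda\|=1\}$ it attains a minimum $\kappa:=\min h$, and if $\kappa>0$ then $h(\lambda)\ge\kappa\|\lambda\|$ for all $\lambda\ge\mathbf 0$. Combining with the previous paragraph, $\kappa\|\lambda^*\|\le h(\lambda^*)\le\sum_{s\in\mathcal I}Q_s$, so $\|\lambda^*\|\le\bigl(\sum_{s\in\mathcal I}Q_s\bigr)/\kappa$ and we may take $\eta_1:=\bigl(\sum_{s\in\mathcal I}Q_s\bigr)^2/\kappa^2<\infty$.

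The main obstacle is establishing $\kappa>0$, i.e.\ that the maximum-weight perfect matching on $\mathcal G^{\text{LCT}}$ has strictly positive weight for \emph{every} nonzero $\lambda\ge\mathbf 0$. Since $r_{n,m}>0$ for all $\{n,m\}\in\mathcal E$ (the capacity lower bound is strictly positive), picking any satellite link $(i,j)$ with $\lambda_{i,j}>0$, this reduces to the purely combinatorial claim that $(i,j)$ is used by at least one perfect matching $\mathbf c^\dagger\in\mathcal C$, which yields $h(\lambda)\ge(\lambda_{i,j}+\lambda_{j,i})r_{n,m}>0$ for the corresponding $\{n,m\}\in\mathcal E_{i,j}$. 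I would discharge this under the standing assumption $\mathcal C\ne\emptyset$ together with the mild requirement that $\mathcal G^{\text{LCT}}$ is matching-covered (every connectable satellite link lies on some perfect matching), which holds for the dense constellation graphs considered here where each satellite carries $N'\ge2$ LCTs; alternatively one verifies it directly from the assumed orbit and FOR geometry. With $\kappa>0$ in hand, both $\eta_1$ and $\eta_2$ are finite constants and the lemma follows.
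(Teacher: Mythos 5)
Your bound on $\max_\lambda\|\delta(\lambda)\|$ is correct and is essentially the paper's own (terser) argument: the routed flow term is bounded via the constraints defining $\mathcal{Q}$, and the matched-rate term via the fact that $\Phi(y,z)\le\Phi_0$ caps every $r_{n,m}$, so $\eta_2$ exists. The gap is in the $\eta_1$ part. Your coercivity route stands or falls on $\kappa>0$, i.e.\ on the claim that every pair $(i,j)\in\mathcal{L}$ lies on at least one feasible matching in $\mathcal{C}$ (recall \eqref{eq:const:link:one_connection} forces a \emph{perfect} matching of all LCTs). You flag this yourself and propose to assume the LCT graph is matching-covered, but that assumption appears nowhere in the paper, is not implied by $N'\ge 2$ or the constellation geometry, and can easily fail (parity/degree obstructions in $\mathcal{G}^{\text{LCT}}$ are generic). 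Worse, when it fails the conclusion you are aiming at is actually false: if some $(i,j)$ is matched by no $\mathbf{c}\in\mathcal{C}$, then $h$ is constant along the ray $\lambda^*+t e_{i,j}$ while the flow term $f$ is nonincreasing in $\lambda_{i,j}$, so $g$ is nondecreasing along that ray and the set of maximizers is unbounded. Hence no argument can bound \emph{every} maximizer (your remark that ``non-uniqueness is harmless'' because the inequality holds for every maximizer is exactly where this bites); the lemma only needs, and the subsequent convergence proof only uses, the \emph{existence} of one bounded optimizer $\lambda^*$.

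The paper sidesteps all of this with a capping argument that you should compare against: for any $\lambda$ define $\lambda'_{i,j}=\min\{1,\lambda_{i,j}\}$ and show $g(\lambda')\ge g(\lambda)$. The weighted-rate term (c) is unchanged, because flows whose minimum routing cost is $\ge 1$ receive zero rate under both $\lambda$ and $\lambda'$, while flows with cost $<1$ keep the same shortest-path cost (if capping created a cheaper path of cost $<1$, that path already had cost $<1$ under $\lambda$, a contradiction); the MWM term (a) can only decrease since $\lambda'\le\lambda$ entrywise and all weights are nonnegative. Applying this to any optimizer shows some $\lambda^*$ lies in $[0,1]^{|\mathcal{L}|}$, so with $\lambda^{[1]}=\mathbf{0}$ one may take $\eta_1=|\mathcal{L}|$, with no structural assumption on matchings. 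If you want to salvage your approach, you would have to either prove the matching-covered property from the model (you cannot in general) or weaken your goal to existence of a bounded maximizer, at which point a monotonicity/truncation argument of the paper's type is the natural tool.
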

\begin{proof}
    The proof is listed in the appendix.
\end{proof}

Applying Lemma~\ref{lemma:optimal_dual_variable_bounded}, we show that the convergence of the subgradient descent algorithm is guaranteed as follows \cite{boyd2003subgradient}.
\begin{theorem}\label{theorem:convergence:classic}
\blue{Define the best-so-far dual iterate as $\hat{\lambda}^{[k]} \in \argmax_{\lambda^{[i]}:i=1,\dots,k} g(\lambda^{[i]})$ after $k$ iterations. The corresponding dual value converges to the optimal dual value $g(\lambda^*)$ as $k\to\infty$, i.e., $\lim_{k\to\infty} g(\hat{\lambda}^{[k]}) = g(\lambda^*)$. Specifically, the dual-value gap satisfies $g(\lambda^*) - g(\hat{\lambda}^{[k]}) = \mathcal{O}(k^{-(1-\beta)})$ when $0.5\leq\beta<1$.}
\end{theorem}
\begin{proof}
    The proof relies on the convergence of p-series in the step sizes \cite{boyd2003subgradient} and Lemma \ref{lemma:optimal_dual_variable_bounded}, as listed in the appendix.
\end{proof}
For simplicity, we return the last iterate of the Lagrange multipliers, $\lambda^{[K]}$, as the optimized Lagrange multipliers $\lambda$.

\begin{cor}\label{cor:complexity:dual_optimization}
    The complexity of the dual optimization is $\mathcal{O}\big(K(E\log E + I E\log N  + \mathrm{poly}(I))\big)$, where $E=|\mathcal{E}|$ is the number of connectable LCT pairs in the constellation.
\end{cor}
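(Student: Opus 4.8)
The plan is to bound the per-iteration cost of Algorithm~\ref{alg:dual_optimization} by analyzing its five computational steps (lines~4--8) in turn, and then multiply by the iteration count $K$. Before that, I would record the structural bounds that tie the graph sizes together: since every neighboring satellite pair in $\mathcal{L}$ contributes at least one connectable LCT pair to $\mathcal{E}$, we have $|\mathcal{L}|\le E$; since $N=N'I\ge I$, we have $\log I\le\log N$; and since each demanding satellite is served only by its $M$ nearest serving satellites, $|\mathcal{F}|\le MI=\mathcal{O}(I)$ with $M$ treated as a constant.

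Next I would bound each step using these inequalities. The MWM in \eqref{eq:routine:mwm_given_lambda} is approximated by greedy weight matching: sorting the $E$ edge weights costs $\mathcal{O}(E\log E)$ and a single pass that marks already-matched LCTs costs $\mathcal{O}(E)$, so this step is $\mathcal{O}(E\log E)$. The SPF step \eqref{eq:routine:spf_given_lambda} is carried out by Dijkstra's algorithm on $\mathcal{G}^{\text{SAT}}=(\mathcal{I},\mathcal{L})$ with a binary heap; a run from a given source costs $\mathcal{O}\big((|\mathcal{L}|+I)\log I\big)=\mathcal{O}(E\log N)$, and with at most $I$ distinct sources (or, equivalently, $|\mathcal{F}|=\mathcal{O}(I)$ source--target pairs) the total is $\mathcal{O}(IE\log N)$. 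The FRM \eqref{eq:routine:flow_rate_given_routing_cost} is a linear program in the $|\mathcal{F}|=\mathcal{O}(I)$ flow-rate variables subject to the $2I$ constraints \eqref{eq:const:traffic:flow_rate}, hence solvable in $\mathrm{poly}(I)$ time by any polynomial-time LP method. Finally, assembling the supergradient \eqref{eq:dual_function_subgradient} touches each flow--link pair and each LCT pair once, costing $\mathcal{O}(|\mathcal{F}||\mathcal{L}|+E)=\mathcal{O}(IE)$, and the projected update \eqref{eq:routine:dual_variable_update} costs $\mathcal{O}(|\mathcal{L}|)=\mathcal{O}(E)$; both are dominated by the SPF term.

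Summing over the steps, one iteration costs $\mathcal{O}(E\log E+IE\log N+\mathrm{poly}(I))$, and running $K$ iterations (the one-off conversion in line~10 being dominated by the loop) gives the claimed $\mathcal{O}\big(K(E\log E+IE\log N+\mathrm{poly}(I))\big)$. I do not anticipate a genuine obstacle: the argument is elementary counting once the structural inequalities $|\mathcal{L}|\le E$, $|\mathcal{F}|=\mathcal{O}(I)$, and $\log I\le\log N$ are in place. The only mild care needed is (i) in the SPF step, noting that whether Dijkstra is run once per source or once per source--target pair the constant $M$ keeps the total at $\mathcal{O}(IE\log N)$, and (ii) in verifying that the inter-step bookkeeping — feeding the routing costs into the FRM and accumulating the supergradient — is absorbed by the dominant terms, which the counting above confirms.
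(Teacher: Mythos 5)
Your proposal is correct and follows essentially the same argument as the paper: bound $|\mathcal{F}|=\mathcal{O}(I)$, charge $\mathcal{O}(E\log E)$ to the greedy matching, $\mathcal{O}(IE\log N)$ to the Dijkstra runs, and $\mathrm{poly}(I)$ to the LP, then multiply by $K$. The only difference is that you additionally account for the supergradient assembly and multiplier update, which — as you note — are dominated by the other terms, so nothing changes.
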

\begin{proof}
The number of source-target satellite pairs is bounded as $|\mathcal{F}| \leq I \cdot M\approx\mathcal{O}(I)$.
To approximate the MWM using the greedy weight matching in \eqref{eq:routine:mwm_given_lambda}, we need to sort the edge weights, which takes $\mathcal{O}(E\log E)$ complexity. When matching the LCT pairs, we need to iterate through all edges in the constellation and check whether the LCT pairs have been matched before, which takes $\mathcal{O}(E)$ complexity. Dijkstra's algorithm for computing the shortest path in \eqref{eq:routine:spf_given_lambda} takes $\mathcal{O}(|\mathcal{F}|\cdot E\log N )\approx\mathcal{O}(I E\log N )$ for all $|\mathcal{F}|$ source-target satellite pairs \cite{cormen2022introduction}.
The LP problem in \eqref{eq:routine:flow_rate_given_routing_cost} can be efficiently solved by the simplex method, taking polynomial time in the number of constraints in $\mathcal{Q}$, i.e., $\mathcal{O}(\mathrm{poly}(I))$ \cite{vershynin2009hirsch}.
Collecting these complexities, we have the overall complexity as the statement.
\end{proof}

\subsection{Converting Lagrange Multipliers to Matching/Routing}\label{subsec:dual_variable_guided_matching_and_routing}
For given Lagrange multipliers $\lambda$, their values indicate the weights of the maximum link rate constraints in \eqref{eq:const:path:flow_rate} for each neighboring satellite pair $(i,j)\in\mathcal{L}$.
For example, a larger $\lambda_{i,j}$ indicates that the rate constraint for the pair $(i,j)$ is harder to satisfy and that traffic will heavily load that link.
Thus, we should match the LCTs between the neighboring satellites $i$ and $j$ to allow traffic flows.
On the other hand, this also means that we should try to avoid routing traffic from $i$ to $j$ as the link is more likely to be congested. 
Based on this intuition, we can obtain a feasible solution from the optimized Lagrange multipliers $\lambda$ by sequentially 1) matching the highest-weighted LCT pairs, 2) routing the traffic flows from source to target satellites through the connected LCT pairs with the minimum cost, and 3) computing the maximum traffic flow rates based on the routing paths and the connected LCT pairs.

Specifically, first, we use the greedy weight matching to compute the MWM of the graph $\mathcal{G}^{\text{LCT}}(\mathcal{N},\mathcal{E})$ for the given Lagrange multipliers $\lambda$ (the same as \eqref{eq:routine:mwm_given_lambda}) as
\begin{equation}\label{eq:rounding:mwm}
  \begin{aligned}
    \hat{\mathbf{c}} \leftarrow \mathrm{MWM}(\mathcal{N}, \mathcal{E}, \{\lambda_{i_n,i_m} \cdot r_{n,m}\}_{(n,m)\in \mathcal{E}}).
  \end{aligned}
\end{equation}
The connected satellite pairs in the above are given by
\begin{equation}
  \begin{aligned}
    \mathcal{L}' =\{(i,j)| \exists (n,m) \in \mathcal{E}_{i,j}, c_{n,m} = 1 \}.
  \end{aligned}
\end{equation}
Based on these actual connected satellite pairs $(i,j)\in\mathcal{L}'$, we use Dijkstra's algorithm to compute the shortest path from each source satellite $s$ to each target satellite $s'$ in the graph $(\mathcal{I}, \mathcal{L}')$ with the edge weights $\{\lambda_{i,j}\}_{(i,j)\in \mathcal{L}'}$ as
\begin{equation}\label{eq:rounding:spf}
  \begin{aligned}
    \tilde{\mathbf{x}}^{s,s'} \leftarrow \mathrm{SFP}(s,s',\mathcal{I}, \mathcal{L}', \{\lambda_{i,j}\}_{(i,j)\in \mathcal{L}'}), \ \forall (s,s')\in\mathcal{F}.
  \end{aligned}
\end{equation}
Note that not all source-target satellite pairs $(s,s')\in\mathcal{F}$ have a routing path in the connected constellation, and we collect those connected source-target satellite pairs $(s,s')$ as
\begin{equation}
  \begin{aligned}
    \mathcal{F}' = \{(s,s')\in\mathcal{F} \,|\, \tilde{\mathbf{x}}^{s,s'}\ \text{is feasible in}\ \text{\eqref{eq:rounding:spf}}\},
  \end{aligned}
\end{equation}
which reduces the number of source-target satellite pairs to allocate traffic rates.
Finally, for these connected satellite pairs $(s,s')\in\mathcal{F}'$, we optimize the traffic flow rates based on the connected LCT pairs $\hat{\mathbf{c}}$ and their routing paths in $\tilde{\mathbf{x}}^{s,s'}$ as
\begin{equation}\label{eq:rounding:frm}
  \begin{aligned}
    \tilde{\mathbf{q}} \leftarrow \argmax_{
        \substack{
          \mathbf{q}\in \mathcal{Q}
        }
      }
      \sum_{(s,s')\in\mathcal{F}'}  q^{s,s'}, \ \text{s.t.}\ \eqref{eq:const:path:flow_rate}\ \text{given}\ \hat{\mathbf{c}},\ \tilde{\mathbf{x}}^{s,s'}.
  \end{aligned}
\end{equation}

Similar to Corollary~\ref{cor:complexity:dual_optimization}, the above conversion has complexity $\mathcal{O}(E\log E + I E\log N  + \mathrm{poly}(I+E))$.

\begin{figure}[!t]
  \centering
  \includegraphics[width=.95\columnwidth]{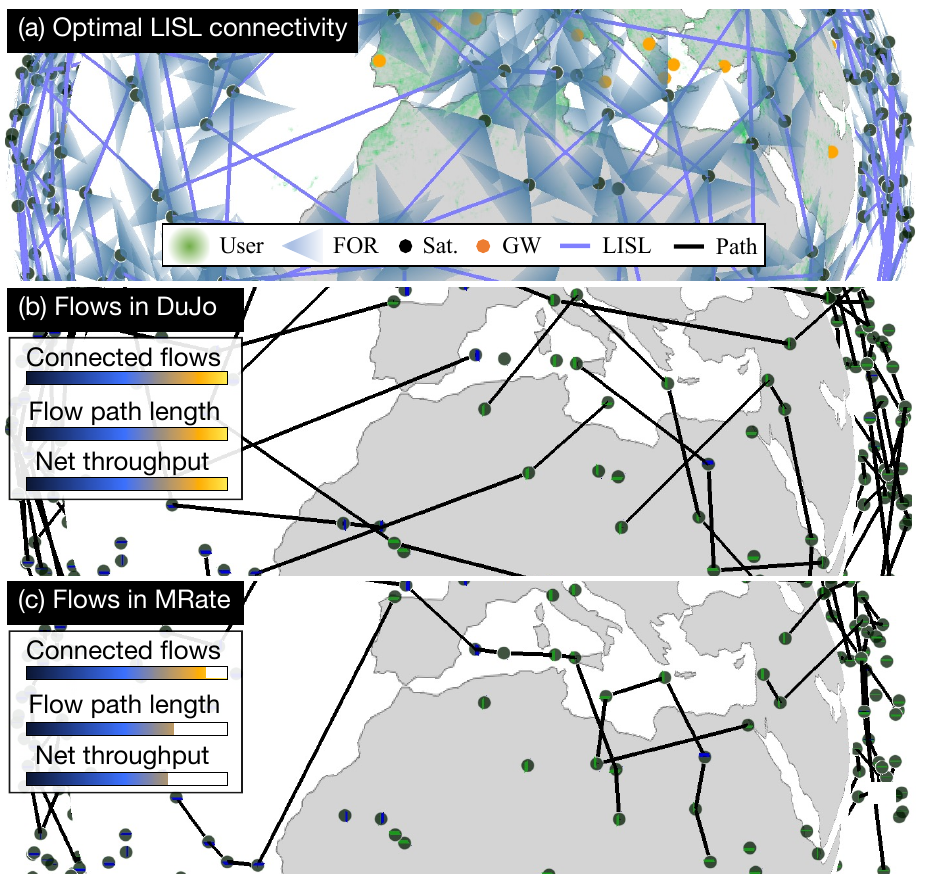}
  \caption{\blue{Comparison for the simulated Starlink-based constellation instance with $I=1000$ satellites. Panel (a) shows the constellation snapshot together with the LISLs selected by DuJo, the user distribution, and the ground-gateway locations. Panels (b) and (c) show the same geographic zoom and the same traffic-demand snapshot under DuJo and MRate, respectively, so their routed flow paths can be compared directly. The inset boxes in panels (b) and (c) report the resulting number of connected flows, flow-path length, and net throughput for the two schemes.}}
  \label{fig:constellation_simulation_1000_compare}
\end{figure}

\section{Simulation Results}\label{sec:simulation}
\subsection{Simulation Setup}
To efficiently process the graphs and simulate the constellation, we implement our algorithms and simulations in a data-oriented programming paradigm. The constellation data is organized into contiguous data arrays and is processed in just-in-time (JIT) compiled procedures using Numba for Python \cite{lam2015numba} for high computational efficiency. Linear programming is solved using the HiGHS solver \cite{huangfu2018parallelizing}. 
The simulations run on a workstation with an Intel Core Ultra 9 285K (24 cores) and 32 GB of memory.
\subsubsection{System Parameters}
We construct the constellation in simulations based on the real-world Starlink constellation two-line element (TLE) data from CelesTrak \cite{CelesTrak} at $\mathcal{T}_0$ as UTC $2025-07-16$ $16:00$. \blue{To study problem-scale sensitivity, we randomly select $I$ satellites out of the Starlink constellation for the varying-$I$ and computing-time results. We also test a structured shell-based Starlink instance with $I=1000$ satellites. Specifically, at $\mathcal{T}_0$ we select a coherent 1000-satellite block from the dominant $53.2^\circ$ shell and then keep the same satellites over snapshot offsets of $0$, $30$, $60$, $90$, $120$, and $180$ min.}
Each satellite has $N'=2$ LCTs: one LCT points in the direction of the satellite's velocity, and the other points in the opposite direction. The attitude control system maintains these orientations.
We configure the beam parameters as follows \cite{kaymak2018survey}: Each LCT has an aperture area of $A=0.01$ $\text{m}^2$ and a responsivity of $\Psi=0.5$ $\text{A/W}$. The noise current is $\sigma_{\mathrm{N}}=3\times 10^{-7}$~A in root-mean-square (RMS) value \cite{maxim_spf_transimpedance}.
The power of the beam is $P_0=20$ W and its bandwidth is $B=1$ GHz. The beam's wavelength is $1.55$ $\mu$m and the beam width is $100$ micro-radians, leading to the beam waist radius $W_0=9.87\times10^{-3}$ m and the Rayleigh range $z_\mathrm{R}=1.97\times10^{3}$ m \cite{saleh2019fundamentalsa}. 
The pointing jitter $\sigma_{\mathrm{J}}$ is set to $10$ micro-radians and the relaxed link outage probability $\epsilon$ is set to $0.001$.
The field of regard half-angle of each LCT is set to $\theta = 60$ degrees from its mounting direction, and the maximum distance for two LCTs to establish a link is $\hat{z}=3000$ km.
\blue{Under this setting, candidate LISLs are not restricted to intra-plane neighbors. Any satellite pair, whether intra-plane or inter-plane, is included if it satisfies \eqref{eq:lct_visibility_condition}.}
The number $U_i$ of active users served by satellite $i$ is determined from the real-world population data \cite{schiavina2023ghspop} within an approximately $200$ km coverage radius. We assume that $0.01\%$ of that population uses the network at the given time, so $U_i$ is Poisson distributed with a mean equal to the covered population. \blue{For the load-stress evaluation, we vary this active-user percentage from $0.0005\%$ to $0.1\%$ while keeping the user-demand model otherwise unchanged.} Each user demands $D=0.1$ Gbps. We randomly select $100$ ground-gateway locations around the globe from SatNOGS \cite{satnogs}. We assume that each satellite can serve up to $Q=20$ Gbps when a ground gateway lies within its coverage area. \blue{Thus, $Q=20$ Gbps is the per-satellite gateway service capacity before subtracting the traffic of the users under that satellite's own coverage, rather than a limit that already includes LISL transit traffic.} If a satellite has positive traffic demand, it routes that demand to the nearest $M=5$ satellites that have gateway connections, thereby defining the source-target pairs in $\mathcal{F}$.
Fig.~\ref{fig:constellation_simulation_1000_compare} shows the constellation with $I=1000$ together with the uneven distributions of population density and ground-gateway locations, which emphasize the need to account for these factors in our study.
It also compares the routed flow paths and network throughput over the constellation between our DuJo scheme and a non-joint scheme (MRate, as introduced below).
In the remaining simulations, we consider the constellation with $I=1000$ without explicit notation.

\begin{figure}[!t]
  \centering
  \includegraphics[scale=0.85]{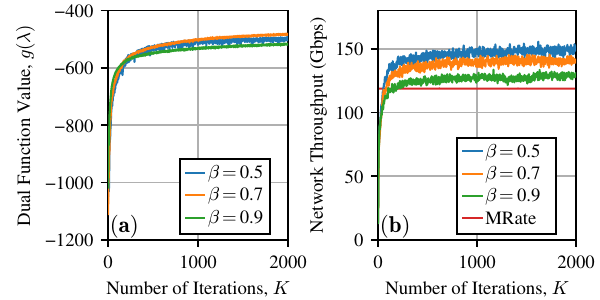}
  \caption{Convergence of DuJo. Panel (a) shows the subgradient-descent convergence, and panel (b) shows the throughput obtained after converting the Lagrange multipliers.}  \label{fig:plot_test_dual_optimization_starlink_constellation_varying_beta_d_o_and_p_o_horizontal}
\end{figure}
\subsubsection{\blue{Baseline Methods}} 
We compare our duality-based joint method, denoted in the legend as ``\textbf{DuJo},'' with the following baseline methods for LISL matching, traffic routing, or learned decision inference:
\begin{itemize}
    \item \textbf{+Grid:} A grid-based link matching \cite{bhattacherjee2019network} prioritizes the LCT pairs that are more aligned in their pointing directions, e.g., pairs with higher $\mathbf{d}_{i_n,i_m} \cdot \mathbf{u}_{m}+\mathbf{d}_{i_m,i_n} \cdot \mathbf{u}_{n}$.
    \item \blue{\textbf{MRate:} Maximum link rate matching sets the weights of LCT pairs to their maximum transmission rates, $r_{n,m}$, to emphasize link-capacity-oriented topology design \cite{guo2024constellation}. Note that in both MRate and +Grid, traffic flows are then routed by open shortest path first (OSPF) \cite{moy1998ospf} using link weights that are the reciprocals of the aggregate satellite-pair rates $1/\sum_{\{n,m\}\in \mathcal{E}_{i,j}} c_{n,m}r_{n,m}$, followed by LP-based flow rate allocation.}
    \item \blue{\textbf{SaTE:} Following the non-joint traffic-engineering setting of SaTE \cite{wu2025sate}, this baseline first fixes the LISL matching by maximum link rate matching and then routes the flows by shortest paths under unit hop cost. Given that topology and routing, the traffic flow rates are computed optimally by LP to maximize the network throughput. Thus, only the traffic rate allocation is optimized, while the link matching and traffic routing are not jointly optimized.}
    \item \blue{\textbf{DRL:} A trained policy-gradient baseline \cite{sutton2000policy} uses the constellation graph as its state, where each satellite node carries its gateway service capacity and traffic demand and each connectable satellite pair carries its candidate LISL capacity. Its action is to infer the price graph over the connectable satellite pairs, and its reward is derived from the resulting traffic-rate allocation after converting those inferred prices to matching, routing, and rate allocation.}
    \item \blue{\textbf{Rand}: Random matching first assigns random weights to the candidate links in the same feasible connectability set $\mathcal{E}$ and then matches the links from the largest weights. The traffic flows are then routed by OSPF on the resulting topology, where the rates of the paths are allocated by LP.}
\end{itemize}
\blue{In the structured-shell comparison, we focus the main benchmark on DuJo, DRL, SaTE, MRate, and +Grid.}

\subsection{Numerical Results}
\subsubsection{Convergence of Dual Function}
The convergence of the dual function $g(\lambda)$ is illustrated in Fig.~\ref{fig:plot_test_dual_optimization_starlink_constellation_varying_beta_d_o_and_p_o_horizontal}a for different $\beta$ values. 
\blue{As expected for subgradient methods, the per-iteration dual values are not necessarily monotone and may fluctuate when multiple flows compete for the same bottleneck LISLs. Nevertheless, the diminishing step size progressively damps these fluctuations, while Theorem \ref{theorem:convergence:classic} concerns the best-so-far dual value rather than every individual iterate.}
The performance of the converted LISL matching and flow routing decisions from the iterated Lagrange multipliers is shown in Fig.~\ref{fig:plot_test_dual_optimization_starlink_constellation_varying_beta_d_o_and_p_o_horizontal}b.
\blue{The converted throughput can also fluctuate across iterations, but it approaches a stable level as the multipliers settle. The network throughput is higher for $\beta=0.5$ than for $\beta=0.7$ and $0.9$, indicating that a slower step-size decay leads to better performance because it explores the multiplier space more thoroughly.}
Also, we observe that the throughput reaches a near-maximum value around $K=500$ iterations. \blue{For the remaining simulations, we fix $\beta=0.5$ and $K=500$.}

\begin{figure}[!t]
  \centering
  \includegraphics[scale=0.85]{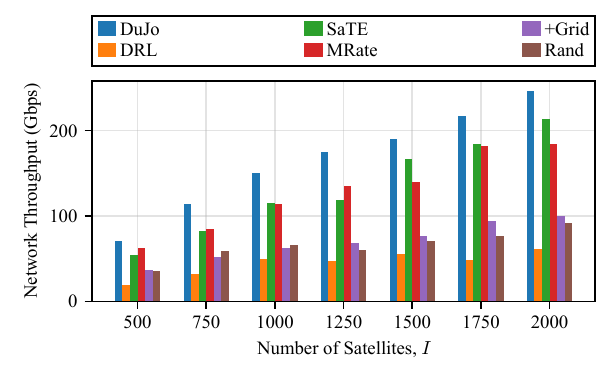}
  \caption{DuJo vs.\ baselines for different numbers of satellites $I$.}
  \label{fig:plot_test_dual_optimization_vs_grid_rand}
\end{figure}

\subsubsection{\blue{Comparison with Baselines}}
\blue{We compare our method with the baseline methods in Fig.~\ref{fig:plot_test_dual_optimization_vs_grid_rand}, where the network throughput is plotted against the number of satellites $I$ in the constellation. The results show that our method improves the network throughput substantially as the constellation size grows. Among the strengthened baselines, SaTE is generally the strongest separated baseline for larger constellations, whereas DRL remains below the heuristic baselines in this varying-$I$ study. This is because the DRL policy is trained using a network-wise reward that is derived from the resulting traffic flow rates, which do not directly supervise how the Lagrange multipliers should be optimized to improve the matching and routing decisions. Thus, the DRL policy may not learn to optimize the Lagrange multipliers effectively, especially when the constellation size grows with many satellite pairs.}
\blue{Compared with +Grid, MRate, and SaTE, our method achieves a higher throughput because it iteratively updates the Lagrange multipliers to improve the LISL matching decisions together with the resulting routing and rate-allocation structure, rather than fixing the topology first.}

\begin{figure}[!t]
  \centering
  \includegraphics[scale=0.85]{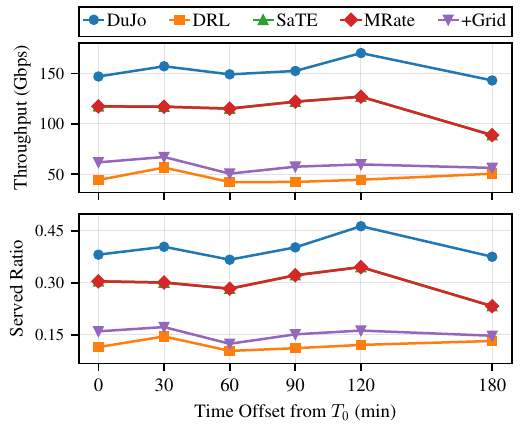}
  \caption{\blue{Structured shell-based Starlink evaluation over time. A coherent $I=1000$ satellite block from the dominant $53.2^\circ$ shell is kept fixed while the constellation is propagated over snapshot offsets of $0$, $30$, $60$, $90$, $120$, and $180$ min. The top plot shows the served throughput and the bottom plot shows the served-demand ratio for DuJo, DRL, SaTE, MRate, and +Grid.}}
  \label{fig:plot_test_tcom_shell_time_baselines}
\end{figure}

\subsubsection{\blue{Structured Shell-Based Evaluation Over Time}}\label{subsubsec:structured_shell_time_evaluation}
\blue{We next evaluate the structured shell-based Starlink instance. Fig.~\ref{fig:plot_test_tcom_shell_time_baselines} shows the results for the same $I=1000$ satellite block over six snapshot times.}
\blue{DuJo achieves the highest served throughput and the highest served-demand ratio at all six snapshot times. SaTE and MRate form the next performance tier, +Grid follows, and DRL remains the lowest among the compared methods.}

\begin{figure}[!t]
  \centering
  \includegraphics[scale=0.85]{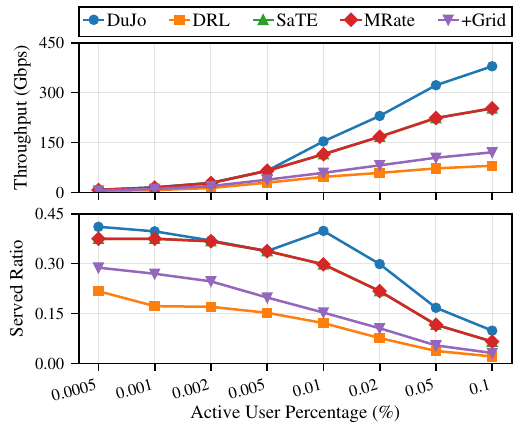}
  \caption{\blue{Load-stress evaluation on the structured shell-based Starlink instance. The active-user percentage varies from $0.0005\%$ to $0.1\%$. The top and the bottom plots show the throughput and the served-demand ratio, respectively.}}
  \label{fig:plot_test_tcom_shell_time_load_scaling}
\end{figure}

\subsubsection{\blue{Load-Stress Evaluation}}\label{subsubsec:load_stress_evaluation}
\blue{For the load-stress evaluation, we vary the active-user percentage while keeping the per-user demand fixed at $D=0.1$ Gbps. Fig.~\ref{fig:plot_test_tcom_shell_time_load_scaling} reports both the served throughput and the served-demand ratio.}
\blue{As the active-user percentage increases, all methods carry more traffic in absolute terms, but the served-demand ratio also decreases, which indicates growing congestion in the gateway and LISL resources. DuJo gives the best reported performance at all tested loads. DRL remains below the heuristic baselines across this load range.
This suggests that the joint optimization remains the most effective approach on this structured-shell benchmark, especially once the network becomes meaningfully loaded.}

\begin{figure}[t]
  \centering
  \includegraphics[scale=0.85]{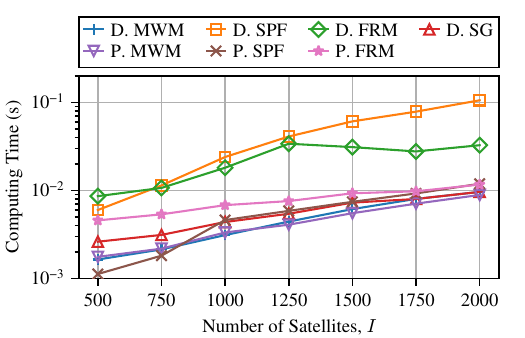}
  \caption{Computing time measurements for the steps in our method, DuJo, with the legends interpreted as follows: the dual-optimization steps are link matching \eqref{eq:routine:mwm_given_lambda}, flow routing \eqref{eq:routine:spf_given_lambda}, rate allocation \eqref{eq:routine:flow_rate_given_routing_cost}, and the subgradient descent step \eqref{eq:routine:dual_variable_update} (with legends ``D. MWM'', ``D. SPF'', ``D. FRM'', and ``D. SG'', respectively); the conversion steps from the Lagrange multipliers are link matching with given Lagrange multipliers \eqref{eq:rounding:mwm}, flow routing with given matched LISLs \eqref{eq:rounding:spf}, and rate allocation with matching and routing decisions \eqref{eq:rounding:frm} (with legends ``P. MWM'', ``P. SPF'', and ``P. FRM'', respectively).
  }
  \label{fig:plot_test_computing_time_measurement}
\end{figure}

\subsubsection{Evaluation of Computational Complexity}
We then evaluate the computational complexity of our method in Fig.~\ref{fig:plot_test_computing_time_measurement} by measuring the computing time of the dual optimization algorithm in Algorithm~\ref{alg:dual_optimization} and the conversion from Lagrange multipliers to the LISL matching and flow routing decisions in \eqref{eq:rounding:mwm}-\eqref{eq:rounding:frm}.
We observe that the computing time increases with the number of satellites $I$ in our method. Here, the conversion of the Lagrange multipliers can be done instantly within $1$ second.
The dual iterations take about $1$ second each. However, the algorithm needs a few hundred iterations to converge, so the total computing time reaches the order of minutes for larger constellations and therefore still requires acceleration.

\subsubsection{\blue{Impact of Constellation Configurations}}
\blue{We vary the LCT configurations on each satellite, and the resulting network throughput is shown in Fig.~\ref{fig:plot_test_dual_optimization_vs_grid_rand_varying_availability_and_for}. The network throughput increases with the average number of LCTs on each satellite because more LCTs allow more LISLs to be established. Similarly, the throughput increases with the FOR because a larger FOR creates more connectable LCT pairs. DuJo consistently outperforms the baseline methods across these sweeps. Among the strengthened baselines, SaTE is generally the strongest separated baseline, while DRL remains weak and does not change the overall ordering. The performance of the grid-based scheme varies little with the FOR because only LCTs near the center of the steering range tend to be connected.}
\blue{Next, we compare the performance of our method in different constellations. DuJo achieves the highest network throughput in all three constellations. In the Starlink case, SaTE is the strongest separated baseline. In the more regular walker-delta and OneWeb constellations, +Grid and SaTE come much closer to DuJo because those regular orbital structures make it easier to establish high-rate LISLs while preserving connectivity for the traffic flows. DRL remains distinctly lower in all three constellations. However, in the less regular Starlink case, the fixed-topology baselines remain poor because they cannot adapt the LISL matching to the routing bottlenecks as effectively as the joint optimization.}

\begin{figure}[t]
  \centering
  \includegraphics[scale=0.85]{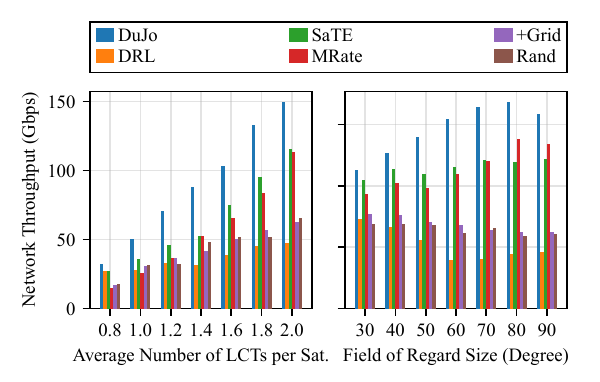}
  \caption{Comparison when varying LCT configurations on satellites.}
  \label{fig:plot_test_dual_optimization_vs_grid_rand_varying_availability_and_for}
\end{figure}

\begin{figure}[t]
  \centering
  \includegraphics[scale=0.85]{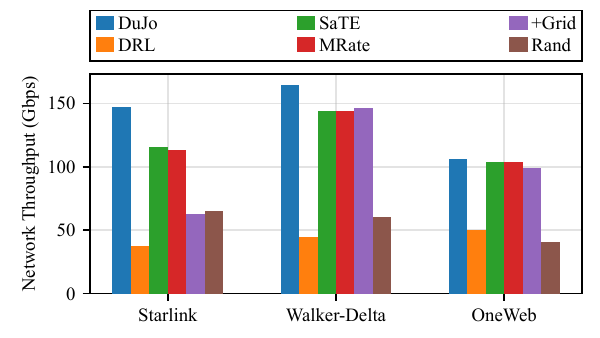}
  \caption{Comparison in a Starlink-based ($I=1000$), a regular walker-delta ($50^{\circ}$ inclination, $I=1000$), and a OneWeb constellation ($I=650$).}
  \label{fig:plot_test_dual_optimization_different_constellation}
\end{figure}

\begin{figure}[!t]
  \centering
  \includegraphics[scale=0.85]{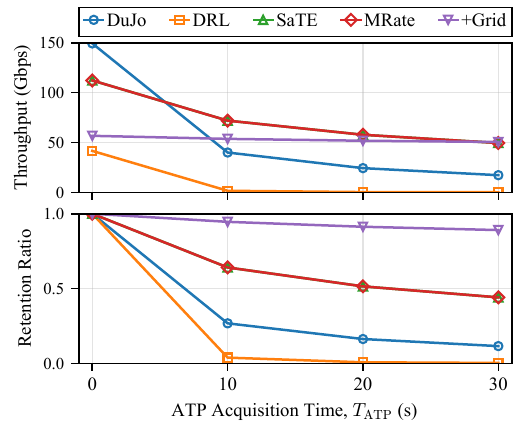}
  \caption{\blue{ATP-time sensitivity evaluation on the structured shell-based Starlink instance. The topology is evaluated every $10$ s over offsets of $0,10,\ldots,360$ s, while the ATP time is swept over $T_{\mathrm{ATP}}\in\{0,10,20,30\}$ s. The top plot shows the average throughput, and the bottom plot shows the retained fraction relative to the original per-snapshot throughput.}}
  \label{fig:plot_test_tcom_atp_transition}
\end{figure}

\subsubsection{\blue{Impact of ATP Time}}\label{subsubsec:atp_transition_evaluation}
\blue{To evaluate non-zero ATP overhead without replacing the proposed per-snapshot optimizer, we perform an ATP-time sensitivity simulation on the same structured shell-based Starlink instance. The topology-evaluation interval is fixed to $10$ s, and the ATP acquisition time is swept from $0$ to $30$ s.}
\blue{For each method, we first compute the original per-snapshot matching, routing, and rate allocation once for each offset. We then replay the ATP transition process for each value of $T_{\mathrm{ATP}}$. A LISL that is decided to be connected at the snapshot $t$ is unavailable over $[t,t+T_{\mathrm{ATP}})$, and it becomes usable once the acquisition time has elapsed. If it is dropped before acquisition completes, its ATP state is discarded. Finally, we recompute the throughput by LP using only the ATP-usable LISLs.}
\blue{The results in Fig.~\ref{fig:plot_test_tcom_atp_transition} show that increasing ATP acquisition time can materially reduce realized throughput when the per-snapshot topology changes quickly. The retention ratios quantify how much of the original per-snapshot throughput remains when ATP overhead is considered. Methods that change many LISLs across nearby snapshots have lower retention ratios as $T_{\mathrm{ATP}}$ increases, whereas methods with more persistent LISLs are less sensitive. These results motivate the future ATP-aware temporal optimization described in the appendix, complementing recent time-dependent topology optimization studies for LEO constellations \cite{ron2025time}.}

\section{Conclusion}\label{sec:conclusion}
This paper presents a Lagrangian duality-based method for joint link matching and traffic routing in satellite mega-constellations. The method applies Lagrangian dual relaxation to the per-link rate constraints and decomposes the joint problem into efficiently solvable subproblems.
Simulations show that our method outperforms the baseline methods in network throughput by iteratively adjusting the Lagrange multipliers to capture both connectivity importance and link congestion, thereby improving LISL matching and flow routing decisions.
\blue{The results also show that the method is practically computable for this mixed-integer formulation, while the associated fixed-snapshot abstract decision problem is NP-hard and the computing time increases for larger constellations.}
Further research could improve the computational efficiency, e.g., by using a proper learning algorithm that integrates the dual optimization into training to predict the Lagrange multipliers directly without iterations \cite{gu2026deepladu}.
The observation on the impact of LCT configurations on the whole constellation suggests optimizing the LCT design to maximize the network profit.
\blue{The ATP-time sensitivity evaluation further shows that non-zero acquisition time can substantially reduce realized throughput when per-snapshot topologies change rapidly. A full ATP-aware temporal optimizer that jointly balances instantaneous throughput and link persistence is therefore an important future direction.}


\section*{Appendix: Proof of Lemma~\ref{lemma:optimal_dual_variable_bounded}}
For given Lagrange multipliers $\lambda$, we map them to new ones as $\lambda'_{i,j} = \min\{1, \lambda_{i,j}\}$, $\forall (i,j)\in\mathcal{L}$.
For $\lambda$, let the set of flows with routing cost greater than or equal to $1$ be $\mathcal{F}^+$ and the set of those with routing cost less than $1$ be $\mathcal{F}^-$. Then consider the flow costs for $\lambda'$. In the set $\mathcal{F}^-$, the routing cost is unchanged since the routing cost was less than $1$ for $\lambda$ and is still the minimum cost for $\lambda'$.
In the set $\mathcal{F}^+$, assuming that a flow's routing decisions change and its cost becomes less than $1$, this new routing cost should also be the minimum cost for $\lambda$ since all links in the path have a cost less than $1$, which contradicts the fact that the minimum routing cost is greater than or equal to $1$ for flows in $\mathcal{F}^+$.
Therefore, the sets $\mathcal{F}^+$ and $\mathcal{F}^-$ remain the same for $\lambda'$ as for $\lambda$.
Since the flow rates in $\mathcal{F}^+$ computed in the dual function are always $0$ (due to the negative weights in the maximization objective) and the routing costs of flows in $\mathcal{F}^-$ remain the same, the value of the routing-cost-weighted flow-rate maximization is the same for $\lambda'$ and $\lambda$, i.e.,
\begin{equation}
  \begin{aligned}
      &\max_{
        \substack{
          \mathbf{q}\in \mathcal{Q}
        }
      }
      \big\{
      \sum_{(s,s')\in\mathcal{F} } q^{s,s'} (1 - \min_{
        \substack{
          \mathbf{x}^{s,s'}\in\mathcal{X}^{s,s'}
        }
      }
      \sum_{(i,j)\in\mathcal{L}}\lambda'_{i,j} x^{s,s'}_{i,j})   
      \big\}\\
      =
      &\max_{
        \substack{
          \mathbf{q}\in \mathcal{Q}
        }
      }
      \big\{
      \sum_{(s,s')\in\mathcal{F} } q^{s,s'} (1 - \min_{
        \substack{
          \mathbf{x}^{s,s'}\in\mathcal{X}^{s,s'}
        }
      }
      \sum_{(i,j)\in\mathcal{L}}\lambda_{i,j} x^{s,s'}_{i,j})   
      \big\}.
\end{aligned}
\notag
\end{equation}
Also, the MWM with $\lambda'$ yields matched weights no greater than those with $\lambda$ since $\lambda'\leq\lambda$, as
\begin{equation}
  \begin{aligned}
&\max_{
  \substack{
    \mathbf{c}\in\mathcal{C}
  }
}
\big\{
\sum_{\{n,m\}\in\mathcal{E}} (\lambda'_{i_n,i_m} + \lambda'_{i_m,i_n})  r_{n,m} c_{n,m}
\big\}\\ 
\leq&
\max_{
  \substack{
    \mathbf{c}\in\mathcal{C}
  }
}
\big\{
\sum_{\{n,m\}\in\mathcal{E}} (\lambda_{i_n,i_m} + \lambda_{i_m,i_n})  r_{n,m} c_{n,m}
\big\}.
  \end{aligned}
  \notag
\end{equation}
Due to the above facts, $g(\lambda')\geq g(\lambda)$. Since we can always find such $\lambda'$ for any $\lambda$ as well as for any $\lambda^*$, there exist optimal Lagrange multipliers $\lambda^*$ with all elements less than or equal to $1$. Thus, the difference between $\lambda^{[1]}=\mathbf{0}$ and $\lambda^*$ can be bounded by $\|\mathbf{1}^{|\mathcal{E}|\times1}\|$. The boundedness of the maximum norm of the supergradient $\max_\lambda \|\delta(\lambda)\|$ is due to the fact that the traffic flow rates in $\mathcal{Q}$ are bounded and the capacities of LISLs are bounded by the maximum at zero distance.

\section*{Appendix: Proof of Theorem 1}
The Lagrange multipliers in the $k$-th iteration are denoted by $\lambda^{[k]}$. The optimal Lagrange multipliers are denoted by $\lambda^*$.
Denote the set of all possible supergradients of $g(\lambda^{[k]})$ at $\lambda^{[k]}$ as $\hat{\partial}_{\lambda^{[k]}} g(\lambda^{[k]}) \triangleq \{\delta| g(\lambda')\leq g(\lambda^{[k]}) + \delta^{\rm T}(\lambda'-\lambda^{[k]}), \forall \lambda'\}$.
$\delta(\lambda^{[k]})$ is a supergradient of $g(\lambda^{[k]})$ at $\lambda^{[k]}$, i.e., $ \delta(\lambda^{[k]}) \in \hat{\partial}_{\lambda^{[k]}}  g(\lambda^{[k]})$.
The difference between the optimal Lagrange multipliers $\lambda^*$ and the iterated ones $\lambda^{[k+1]}$ is given by
\begin{equation}
  \begin{aligned}
    &\|\lambda^{[k+1]} - \lambda^*\|^2 = \|\lambda^{[k]} + \alpha^{[k]} \delta(\lambda^{[k]}) - \lambda^*\|^2\\
  = &\|\lambda^{[k]} - \lambda^*\|^2 + 2\alpha^{[k]}\delta(\lambda^{[k]})^{\rm T} (\lambda^{[k]} - \lambda^*) + \|\alpha^{[k]}\delta(\lambda^{[k]})\|^2\\
\stackrel{\text{(a)}}{\leq}
    &\|\lambda^{[k]} - \lambda^*\|^2 + 2\alpha^{[k]} (g(\lambda^{[k]}) - g(\lambda^*)) + (\alpha^{[k]})^2 \|\delta(\lambda^{[k]})\|^2,
  \end{aligned}
  \notag
\end{equation}
where (a) is due to the definition of the supergradient. By applying telescoping sum, we have
\begin{equation}
  \begin{aligned}
    &\|\lambda^{[k+1]} - \lambda^*\|^2 \leq \|\lambda^{[1]} - \lambda^*\|^2 \\
    &+ 2 \sum_{i=1}^{k}\alpha^{[i]} (g(\lambda^{[i]}) - g(\lambda^*)) + \sum_{i=1}^{k}(\alpha^{[i]})^2 \|\delta(\lambda^{[i]})\|^2.
  \end{aligned}
  \notag
\end{equation}
Note that $g(\lambda^{[i]})\leq g(\hat{\lambda}^{[k]})$ $\forall i$ based on the definition of $\hat{\lambda}^{[k]}$, implying $\sum_{i=1}^{k}\alpha^{[i]} (g(\lambda^{[i]}) - g(\lambda^*)) \leq  \big(g(\hat{\lambda}^{[k]}) - g(\lambda^*)\big)\sum_{i=1}^{k}\alpha^{[i]}$.
Also, $\|\lambda^{[k+1]} - \lambda^*\|^2$ is non-negative. Substituting these facts in the above inequality, we have
\begin{equation}
  \begin{aligned}
    &g(\lambda^*) - g(\hat{\lambda}^{[k]}) \leq \frac{\|\lambda^{[1]} - \lambda^*\|^2+\sum_{i=1}^{k}(\alpha^{[i]})^2 \|\delta(\lambda^{[i]})\|^2}{2\sum_{i=1}^{k}\alpha^{[i]}}.
  \end{aligned}
  \notag
\end{equation}
Using the convergence of $\sum_{i=1}^{k}\alpha^{[i]}$ and $\sum_{i=1}^{k}(\alpha^{[i]})^2$ when $0.5\leq \beta < 1$ \cite{boyd2003subgradient} and the boundedness of $\|\lambda^{[1]} - \lambda^*\|^2$ and $\max_\lambda \|\delta(\lambda)\| $ from Lemma~\ref{lemma:optimal_dual_variable_bounded}, we prove the statement.

{\color{blue}
\section*{Appendix: Proof of Proposition \ref{prop:np_hardness}}
\begin{proof}
We reduce from the integral 2-commodity flow problem on a symmetric digraph $\mathcal{G}$, which is NP-complete \cite[Theorem 1]{jarry2012multiflows}. Each unordered edge $\{u,v\}$ in $\mathcal{G}$ has an integer capacity $\kappa(u,v)=\kappa(v,u)$ that applies to both directed arcs $(u,v)$ and $(v,u)$. The instance has two requests $(s_1,t_1,d_1)$ and $(s_2,t_2,d_2)$, where $s_1,s_2,t_1,t_2$ are distinct vertices in $\mathcal{G}$, and $d_1$ and $d_2$ are positive integers denoting the required flow values. Set $D=d_1+d_2$. The symmetric source problem matches AS-JMR because an established LISL is available in both directions by \eqref{eq:const:link:binary_bidirectional_connection}.

We construct an AS-JMR instance as follows. First, form the AS-JMR satellite set by creating one satellite for each vertex in $\mathcal{G}$. For each commodity $h\in\{1,2\}$ and each unit $p=1,\dots,d_h$, add two new terminal satellites to this AS-JMR satellite set, a source-terminal satellite $a_{h,p}$ and a target-terminal satellite $b_{h,p}$, and include $(a_{h,p},b_{h,p})$ in $\mathcal{F}$. Set $Q_{a_{h,p}}=1$ and $D_{b_{h,p}}=1$ for these terminal pairs, set all other serving and demand bounds to zero, and set the AS-JMR threshold to $\zeta=D$. For each unordered source edge $\{u,v\}$, create one dedicated connectable LCT pair between satellites $u$ and $v$ with rate $\kappa(u,v)$, and create no other real connectable pair between them. For each terminal pair $(a_{h,p},b_{h,p})$, create one unit-rate attachment LCT pair between $a_{h,p}$ and $s_h$ and one unit-rate attachment LCT pair between $t_h$ and $b_{h,p}$. Finally, add isolated LCTs as needed so that all satellites have the same number of LCTs; these added LCTs are not included in any connectable pair and therefore do not affect routing or matching feasibility.

If the source instance is feasible, decompose each integral commodity flow into unit directed paths. For every unit path of commodity $h$, route the corresponding AS-JMR pair $(a_{h,p},b_{h,p})$ through the attachment edge $a_{h,p}\to s_h$, the selected unit $s_h$--$t_h$ path in $\mathcal{G}$, and the attachment edge $t_h\to b_{h,p}$. For every unordered source edge $\{u,v\}$ used by at least one of these paths, establish the dedicated LCT pair created between satellites $u$ and $v$; this bidirectional LISL provides rate $\kappa(u,v)$ on each directed arc $(u,v)$ and $(v,u)$. We also establish all attachment LCT pairs. Since the source flow uses each directed arc $(u,v)$ at most $\kappa(u,v)$ times, the AS-JMR link-rate constraints hold. Thus all $D$ unit terminal pairs are served and the AS-JMR throughput is $D=\zeta$.

Conversely, suppose the constructed AS-JMR instance has a feasible solution with throughput at least $D$. There are exactly $D$ terminal-pair variables $q^{a_{h,p},b_{h,p}}$, each upper bounded by one, so every one of them must equal one. For each such pair, the binary routing constraints define an integral unit flow from $a_{h,p}$ to $b_{h,p}$. After deleting cycles, the path must begin with $a_{h,p}\to s_h$ because $a_{h,p}$ has no other real neighbor, and must end with $t_h\to b_{h,p}$ because $b_{h,p}$ has no other real neighbor. Removing these two attachment arcs gives a directed $s_h$--$t_h$ path in the original graph. Aggregating these extracted paths for each commodity gives integral flows of values $d_1$ and $d_2$. The capacity constraints are respected because the only real connectable LCT pair between any adjacent original satellites $u$ and $v$ has rate $\kappa(u,v)$. Hence the source instance is feasible. The construction is polynomial in the source instance size because it creates one satellite per source vertex, $D$ source-terminal satellites, $D$ target-terminal satellites, and one dedicated LCT pair per unordered source edge; the isolated-LCT padding is polynomial in the number of created real LCTs. Therefore, AS-JMR is NP-hard.
\end{proof}

\section*{Appendix: An MDP Extension for Future Work}
This appendix outlines an example discrete-time MDP formulation in which the Lagrangian decomposition of Section~\ref{sec:lagrangian_dual_relaxation} carries over at each snapshot and the entire temporal coupling is absorbed into the matching subproblem. The formulation is presented as a future-work direction, in line with recent time-dependent topology optimization for LEO constellations \cite{ron2025time}. Exact solution algorithms, convergence analysis, and detailed ATP-time models are outside the scope of this appendix; the ATP-time model below is an illustrative abstraction \cite{kaymak2018survey,li2011analytical,scheinfeild2000acquisition,bashir2021adaptive}.

\noindent\textit{MDP Formulation.} We discretize the continuous time $t$ of Section~\ref{sec:system_model} into snapshots of duration $\Delta t$, indexed by $\ell=0,1,\dots$ with snapshot time $t_\ell=\ell\Delta t$, and attach subscript $\ell$ to snapshot-indexed quantities (e.g., $\mathbf{c}_\ell,\mathbf{x}_\ell,\mathbf{q}_\ell$). For each $\{n,m\}\in\mathcal{E}$, let $\tau_{n,m}\in\mathbb{Z}_{\geq 0}$ denote the ATP time (in snapshots) required after a link is newly established, and let $h_{\ell,n,m}\in\{0,\dots,\tau_{n,m}\}$ be its remaining countdown; the link is usable iff $h_{\ell,n,m}=0$. Let $\omega_\ell=(\{\mathbf{l}_i(t_\ell)\}_i,\{U_i(t_\ell)\}_i)$ collect deterministic geometry and stochastic traffic. The state, action, and ATP transition are
\begin{equation}
\begin{aligned}
&\xi_\ell = (\mathbf{c}_{\ell-1},\mathbf{h}_\ell,\omega_\ell),\ \ u_\ell=(\mathbf{c}_\ell,\mathbf{x}_\ell,\mathbf{q}_\ell)\in\mathcal{C}\!\times\!\mathcal{X}\!\times\!\mathcal{Q},\\
&h_{\ell+1,n,m}=\begin{cases}\tau_{n,m}&\!\!c_{\ell,n,m}\!=\!1,c_{\ell-1,n,m}\!=\!0,\\ (h_{\ell,n,m}\!-\!1)_{+}&\!\!c_{\ell,n,m}\!=\!1,c_{\ell-1,n,m}\!=\!1,\\ 0&\!\!c_{\ell,n,m}=0.\end{cases}
\end{aligned}\notag
\end{equation}
Define the ATP-modulated rate $r^{\text{eff}}_{\ell,n,m}=\mathbb{I}\{h_{\ell,n,m}=0\}r_{n,m}(t_\ell)$, and accrue the per-snapshot throughput $R_\ell=\sum_{(s,s')}q^{s,s'}_\ell$ subject to \eqref{eq:const:path:flow_rate} with $r_{n,m}$ replaced by $r^{\text{eff}}_{\ell,n,m}$. The objective is to maximize the long-term expected discounted throughput $\mathbb{E}^{\pi}\big[\sum_{\ell}\gamma^\ell R_\ell\big]$ for discount $\gamma\in(0,1)$.

\noindent\textit{Per-snapshot Lagrangian decomposition.} Introducing multipliers $\lambda_\ell\geq 0$ and lifting the link-rate constraint, the Bellman equation of the throughput-maximization MDP takes the form
\begin{equation}
\tilde{V}(\xi_\ell)=\min_{\lambda_\ell\geq 0}\max_{u_\ell}\Big\{\tilde{L}_\ell(u_\ell,\lambda_\ell)+\gamma\mathbb{E}[\tilde{V}(\xi_{\ell+1})|\xi_\ell,\mathbf{c}_\ell]\Big\},\notag
\end{equation}
where $\tilde{L}_\ell$ is the throughput-maximization form of the per-snapshot Lagrangian (the negative of \eqref{eq:prob:constrainted_routing_and_matching:lagrangian} with $r^{\text{eff}}_{\ell,n,m}$ in place of $r_{n,m}$). Since the continuation value depends on $u_\ell$ only through the matching variable $\mathbf{c}_\ell$, the inner maximization separates exactly as in \eqref{eq:prob:constrainted_routing_and_matching:lagrangian_dual}: subproblems (b) shortest-path routing and (c) LP rate allocation remain per-snapshot and memoryless, while subproblem (a) becomes
\begin{equation}
\begin{aligned}
&\hat{\mathbf{c}}_\ell(\xi_\ell,\lambda_\ell)=\arg\max_{\mathbf{c}\in\mathcal{C}}\Big\{\gamma\mathbb{E}[\tilde{V}(\xi_{\ell+1})|\xi_\ell,\mathbf{c}]\\
&\qquad+\!\!\sum_{\{n,m\}\in\mathcal{E}}\!\!(\lambda_{\ell,i_n,i_m}\!+\!\lambda_{\ell,i_m,i_n})r^{\text{eff}}_{\ell,n,m}c_{n,m}\Big\},
\end{aligned}\notag
\end{equation}
i.e., a one-step-lookahead MDP-matching with Lagrangian-weighted per-step rewards and continuation value $\tilde{V}$. Candidate approximation methods for $\tilde{V}$ include approximate dynamic programming, deep reinforcement learning, and graph representation learning over the ISL topology \cite{gu2021knowledge,gu2024graph}. The multipliers $\lambda_\ell$ can be updated per snapshot or maintained as a stationary dual policy; developing and analyzing these solvers remains future work.

\noindent\textit{Degenerate case.} When $\tau_{n,m}=0$ for all pairs, $r^{\text{eff}}_{\ell,n,m}\equiv r_{n,m}(t_\ell)$, the matching decouples from the past, and the continuation term in (a) reduces to a constant, so subproblem (a) collapses to the per-snapshot MWM \eqref{eq:routine:mwm_given_lambda} and the proposed per-snapshot formulation is recovered as the ATP-free case of this MDP extension.}

\bibliography{main}
\bibliographystyle{IEEEtran}

\end{document}